\newtheorem{thm}{Theorem}
\newtheorem{lem}[thm]{Lemma}
\newtheorem{cor}[thm]{Corollary}
\newtheorem{defn}[thm]{Definition}
\newtheorem{clm}[thm]{Claim}
\newtheorem{cons}[thm]{Construction}
\newtheorem{prop}[thm]{Proposition}
\newtheorem{rem}[thm]{Remark}
\newtheorem{sub}[thm]{Subroutine}
\newenvironment{theorem}{\begin{thm}\begin{rm}}%
{\end{rm}\end{thm}}
{\end{rm}\end{lem}}
{\end{rm}\end{cor}}
\newenvironment{definition}{\begin{defn}\begin{em}}%
{\end{em}\end{defn}}
{\end{rm}\end{clm}}
{\end{em}\end{cons}}
{\end{em}\end{prop}}
\newcommand{\secref}[1]{\hyperref[#1]{Section \ref{#1}}}
\newcommand{\apref}[1]{\hyperref[#1]{\ref{#1}}}
\newcommand{\thref}[1]{\hyperref[#1]{Theorem \ref{#1}}}
\newcommand{\defref}[1]{\hyperref[#1]{Definition \ref{#1}}}
\newcommand{\cororef}[1]{\hyperref[#1]{Corollary \ref{#1}}}
\newcommand{\propref}[1]{\hyperref[#1]{Proposition \ref{#1}}}
\newcommand{\remref}[1]{\hyperref[#1]{Remark \ref{#1}}}
\newcommand{\lemref}[1]{\hyperref[#1]{Lemma \ref{#1}}}
\newcommand{\clref}[1]{\hyperref[#1]{Claim \ref{#1}}}
\newcommand{\consref}[1]{\hyperref[#1]{Construction \ref{#1}}}
\newcommand{\figref}[1]{\hyperref[#1]{Figure \ref{#1}}}
\newcommand{\eqnref}[1]{\hyperref[#1]{Equation \ref{#1}}}
\newcommand{\subroutineref}[1]{\hyperref[#1]{Subroutine \ref{#1}}}
\newcommand{\ds}{\displaystyle}
\title{Approximation Algorithms for Digraph Width Parameters
\footnote{This work was done when the authors were at Georgia Institute of Technology, Atlanta, GA, USA.}}
\author{
Shiva Kintali\footnote{Department of Computer Science, Princeton University, NJ, USA. Email : {\em{kintali@cs.princeton.edu}}}\ \ \ \ \ \ 
Nishad Kothari\footnote{Department of Combinatorics and Optimization, University of Waterloo, ON, Canada. Email : {\em{nkothari@math.uwaterloo.ca}}}\ \ \ \ \ \ 
Akash Kumar\footnote{Yahoo! Inc., Sunnyvale, CA, USA. Email : {\em{akashkum@yahoo-inc.com}}}
}
\date{}
\begin{document}

\maketitle

\begin{abstract}
Several problems that are {\sf NP}-hard on general graphs are efficiently solvable on graphs with bounded treewidth. Efforts have been made to generalize treewidth and the related notion of pathwidth to digraphs. Directed treewidth, DAG-width and Kelly-width are some such notions which generalize treewidth, whereas directed pathwidth generalizes pathwidth. Each of these digraph width measures have an associated decomposition structure.

In this paper, we present approximation algorithms for all these digraph width parameters. In particular, we give an $O(\sqrt{\log{n}})$-approximation algorithm for directed treewidth, and an $O(\log^{3/2}{n})$-approximation algorithm for directed pathwidth, DAG-width and Kelly-width. Our algorithms construct the corresponding decompositions whose widths are within the above mentioned approximation factors.
\end{abstract}

{\bf{Keywords}} : approximation algorithms, arboreal decomposition, directed treewidth, DAG-decomposition, DAG-width, directed path decomposition, directed pathwidth, Kelly decomposition, Kelly-width, directed vertex separators

\section{Introduction}\label{sec:intro}
The related notions of tree decompositions and path decompositions have been studied extensively by Robertson and Seymour in their seminal work on graph minors. These decompositions correspond to associated width measures for undirected graphs called treewidth and pathwidth respectively. Besides playing a crucial role in structural graph theory, these width measures also proved to be very useful in the design of algorithms. Roughly speaking, treewidth of an undirected graph measures how close the graph is to being a tree. On the other hand, pathwidth measures how close the graph is to being a path. Several problems that are {\sf NP}-hard on general graphs are solvable in polynomial time on graphs of bounded treewidth using dynamic programming techniques. These include classical problems such as hamiltonian cycle, graph coloring, vertex cover, graph isomorphism and many more. We refer the reader to \cite{kloks}, \cite{bodlaender-survey} and references therein for an introduction to treewidth.

One attempt at solving algorithmic problems on digraphs would be to consider the treewidth of the underlying undirected graph. However, this approach suffers from certain drawbacks if the problem being considered depends on the directions of the arcs. For instance, it is possible to orient the edges of a complete graph in order to obtain a directed acyclic graph (DAG). Although the (undirected) treewidth of such a digraph is large, it is easy to solve the Hamiltonian cycle problem on such a digraph. Thus, it would be desirable to have a width measure for digraphs which would attain much lower values on such digraphs than the value of the (undirected) treewidth. Hence, efforts have been made to generalize treewidth and pathwidth to digraphs. Directed treewidth, DAG-width and Kelly-width are some such notions which generalize treewidth, whereas directed pathwidth generalizes pathwidth. Each of these digraph width measures have an associated decomposition structure as well.

Johnson et al. \cite{dtw-definition} introduced the first directed analogue of treewidth called directed treewidth. They demonstrated the algorithmic benefits of directed treewidth by providing efficient algorithms for {\sf NP}-hard problems (such as Hamiltonian cycle) on digraphs of bounded directed treewidth. Reed \cite{dtw-reed} defined another directed analogue which is closely related to the one introduced by Johnson et al. Later on, Berwanger et al. \cite{dagw-definition1} and independently Obdrzalek \cite{dagw-definition2} introduced DAG-width. They demonstrated the usefulness of DAG-width by showing that the winner of a parity game can be decided in polynomial time on digraphs of bounded DAG-width. Parity games are a certain form of combinatorial game played on digraphs. They also give an equivalent characterization of DAG-width in terms of a certain variant of the cops-and-robber game in which the robber is visible and dynamic. More recently, Hunter and Kreutzer \cite{kellywidth-definition} introduced Kelly-width. They presented several equivalent characterizations of Kelly-width such as elimination ordering, partial $k$-DAGs and another variant of the cops-and-robber game in which the robber is invisible and inert. We refer the reader to \apref{ap:cops-and-robber} for a discussion of cops-and-robber games.

All of the above mentioned width measures are generalizations of undirected treewidth. More precisely, for a graph $G$ with treewidth $k$, let $\overline{G}$ be the {\it digraph} obtained from $G$ by replacing each edge $\{u,v\}$ of $G$ by two arcs $(u,v)$ and $(v,u)$, then: $(i)$ the directed treewidth of $\overline{G}$ is equal to $k$ \cite[Theorem 2.1]{dtw-definition}, $(ii)$ the DAG-width of $\overline{G}$ is equal to $k+1$ \cite[Proposition 5.2]{dagw-definition1}, and, $(iii)$ the Kelly-width of $\overline{G}$ is equal to $k+1$ \cite{kellywidth-definition}. Similarly, directed pathwidth introduced by Reed, Seymour and Thomas is a generalization of undirected pathwidth \cite[Lemma 1]{dpw-definition}. Computing the {\it treewidth} (or {\it pathwidth}) of an undirected graph is {\sf NP}-complete \cite{tw-pw-npcomplete}. Moreover, Bodlaender et al. \cite[Theorem 23]{bodlaender-tw-approx} show that unless {\sf P}$=${\sf NP}, neither treewidth nor pathwidth can be approximated within an additive constant or term of the form $n^\epsilon$ for $\epsilon<1$ of optimal. It follows that computing any of these digraph width parameters is also {\sf NP}-complete, and futhermore a similar approximation hardness applies.

All the algorithms proposed for approximating treewidth rely on the relation between treewidth and balanced vertex separators (which we discuss in more detail shortly). In their seminal work, Leighton and Rao\cite{leighton-rao} gave an $O(\log n)$-pseudo approximation algorithm for computing balanced vertex separators. Bodlaender et al. \cite{bodlaender-tw-approx} gave an $O(\log n)$-approximation algorithm for computing treewidth. Their algorithm made use of the small vertex separators obtained using the results of \cite{leighton-rao}. Moreover, their techniques imply that any $\rho$-approximation algorithm for balanced vertex separators can be used to obtain a $\rho$-approximation algorithm for treewidth. Now, let $k$ denote the treewidth of a graph. Bouchitt{\'e} et al. \cite{bkmt-tw-approx} gave an $O(\log k)$-approximation algorithm for treewidth using different techniques. Independently, Amir \cite{eyalamir} gave another approximation algorithm with the same guarantee for treewidth, and this again relies on the algorithms of \cite{leighton-rao}.

The approximation algorithm for balanced vertex separators was improved to $O(\sqrt{\log k})$ by Feige, Hajiaghayi and Lee \cite{fhl-tw-approx}. As per the above discussion and as noted by Feige et al. \cite{fhl-tw-approx}, this gives an $O(\sqrt{\log {k}})$-approximation algorithm for treewidth. Kloks\cite{kloks} gives a procedure to transform a tree decomposition to a path decomposition whose width is at most $\log{n}$ times the width of the original tree decomposition. It follows that the result of Feige et al. \cite{fhl-tw-approx} implies an $O(\sqrt{\log k}\cdot\log{n})$-approximation algorithm for pathwidth.

To the best of our knowledge, no (non-trivial) approximation algorithms are known for any of the above mentioned digraph width parameters. We take a step in this direction. Our algorithms are similar to the above mentioned approximation algorithms for treewidth in the sense that they rely on the approximation algorithms for balanced directed vertex separators (see \defref{definition:directed-separator}). Leighton and Rao \cite{leighton-rao} observed that their algorithm can be extended to work on directed graphs as well. This leads to an $O(\log n)$-approximation algorithm for balanced directed vertex separators using the algorithm for directed edge separators as a black box. This was further improved to $O(\sqrt{\log{n}})$ by Agarwal et al. \cite{agarwal-charikar-makarychev}. Our algorithms make use of their approximation algorithm as a subroutine.

\subsection{Results and techniques}
\begin{itemize}
\item We obtain an $O(\log^{3/2}{n})$-approximation algorithm for directed pathwidth. This algorithm uses ideas similar to those of Bodlaender et al. \cite{bodlaender-tw-approx} for approximating treewidth and pathwidth, which in turn builds on techniques developed by Lagergren \cite{tw-lagergren-algo} and Reed \cite{tw-reed-algo}. Let $G$ be an undirected graph. Informally speaking, a balanced vertex separator is a set of vertices $S\subseteq V(G)$ such that $V(G)-S$ can be divided into two parts of roughly the same size. Their algorithm at a high level uses a divide-and-conquer approach to compute approximate path decompositions of the graphs induced by these two parts, and then uses these to construct an approximate path decomposition of $G$. We refer the reader to \cite[Section 6.1]{kloks} for a detailed description of this algorithm. The approximation guarantee of their algorithm crucially depends on the fact that every graph of treewidth $k$ has a balanced vertex separator of size at most $k+1$. We first establish analogous relations between balanced directed vertex separators (see \defref{definition:directed-separator}) and all of the relevant digraph width parameters, and then use a similar divide-and-conquer approach to compute an approximate directed path decomposition.
\item It turns out that our approximation algorithm for directed pathwidth can also be used to approximate DAG-width (see \defref{definition:DAG-decomposition}) in a natural way. Thereafter, we formulate a width parameter called Kelly pathwidth and the associated decomposition called Kelly path decomposition, and show that these are essentially equivalent to directed pathwidth and directed path decomposition respectively. Although this is not surprising, it turns out to be useful to show how our approximation algorithm for directed pathwidth can also be used to approximate Kelly-width (see \defref{definition:Kelly-decomposition}).
\item We obtain an $O(\sqrt{\log{n}})$-approximation algorithm for directed treewidth (see \defref{definition:arboreal-decomposition}). This algorithm also uses a divide-and-conquer approach, and computes balanced directed vertex separators. However, in this case the divide step may lead to more than two subproblems. Once the approximate arboreal decompositions for these subproblems have been computed, we combine these to construct a decomposition of the input digraph while ensuring that all the conditions stated in \defref{definition:arboreal-decomposition} are met. This construction is based on the proof of Theorem 3.3 of Johnson et al. \cite{dtw-definition} which basically states that a digraph with a {\it haven} of large order has a large directed treewidth. Havens correspond to certain winning strategies in a variant of the cops-and-robber game introduced in \cite{tw-visible-dynamic}. Once again, the approximation guarantee relies on the relations established between balanced directed vertex separators and directed treewidth.
\end{itemize}
\subsection{Organization of this paper}
In \secref{sec:notation-terminology} we describe some notation and terminology used throughout the paper. In \secref{sec:digraphwidthparameters} we discuss the formal definitions of the relevant digraph width parameters. Apart from that, we formulate a new width parameter called Kelly pathwidth and show that it is equivalent to directed pathwidth. In \secref{sec:approximation} we give approximation algorithms for all of these digraph width parameters. In particular, \secref{sec:separator approximation} discusses the notion of balanced directed vertex separators and proves some bounds which are essential for the approximation guarantees of our algorithms. Thereafter, \secref{sec:dpw approximation} presents an $O(\log^{3/2}{n})$-approximation algorithm for directed pathwidth, DAG-width and Kelly-width and \secref{sec:approx-dtw} presents an $O(\sqrt{\log{n}})$-approximation algorithm for directed treewidth. Finally, we discuss some open problems and directions for future work in \secref{sec:discussion}.

\section{Notation and terminology}\label{sec:notation-terminology}
\indent We use standard graph theory notation and terminology (see \cite{diestel}). All digraphs are finite and simple (i.e. no self loops and no multiple arcs). We use the term DAG when referring to directed acyclic graphs. For a digraph $G$, we write $V(G)$ for its vertex set and $E(G)$ for its arc set. For $S \subseteq V(G)$ we write $G[S]$ for the subdigraph induced by $S$, and $G \setminus S$ for the subdigraph induced by $V(G) - S$.

\indent Let $T$ be a DAG. For two distinct nodes $i$ and $j$ of $T$, we write $i \prec_T j$ if there is a directed walk in $T$ with first node $i$ and last node $j$. For convenience, we write $i \prec j$ whenever $T$ is clear from the context. For nodes $i$ and $j$ of $T$, we write $i \preceq j$ if either $i=j$ or $i \prec j$. For an arc $e=(i,j)$ and a node $k$ of $T$, we write $e \prec k$ if either $j=k$ or $j \prec k$. We write $e \sim i$ (and $e \sim j$) to mean that $e$ is incident with $i$ (and $j$ respectively).\\
\indent Let $\mathcal{W}=(W_i)_{i \in V(T)}$ be a family of finite sets called {\it node bags}, which associates each node $i$ of $T$ to a node bag $W_i$. Let $\mathcal{X}=(X_e)_{e \in E(T)}$ be a family of finite sets called {\it arc bags}, which associates each node $i$ of $T$ to an arc bag $X_e$. We write $W_{\succeq i}$ to denote $\ds\bigcup_{j \succeq i} W_j$, and $X_{\sim i}$ to denote $\ds\bigcup_{e \sim i}X_e$. For an arc $e$ of $T$, we write $W_{\succ e}$ to denote $\ds\bigcup_{j \succ e} W_j$.\\
\indent A node is a {\it root} if it has no incoming arcs, and it is a {\it sink} if it has no outgoing arcs. The DAG $T$ is an {\it arborescence} if it has a unique root $r$ such that for every node $i \in V(T)$ there is a unique directed walk from $r$ to $i$. Note that every arborescence arises from an undirected tree by selecting a root and directing all edges away from the root. The DAG $T$ is a {\it directed path graph} if it is an arborescence whose underlying undirected graph is a path.\\
\indent Now, let $G$ be a digraph. Width measures like DAG-width and Kelly-width are based on the following notion of {\it guarding}:
\begin{definition}[Guarding]\label{definition:guarding}
Let $W, X \subseteq V(G)$. We say $X$ {\it guards} $W$ if $W \cap X = \emptyset$, and for all $(u,v) \in E(G)$, if $u \in W$ then $v \in W \cup X$.
\end{definition}

In other words, $X$ guards $W$ means that there is no directed path in $G \setminus X$ that starts from $W$ and leaves $W$. The notion of directed treewidth is based on a weaker condition:
\begin{definition}[$X$-normal]\label{definition:normal}
Let $W, X \subseteq V(G)$. We say $W$ is {\it $X$-normal} if $W \cap X = \emptyset$, and there is no directed path in $G \setminus X$ with first and last vertices in $W$ that uses a vertex of $G \setminus (W \cup X)$.
\end{definition}

In other words, $W$ is $X$-normal means that there is no directed path in $G \setminus X$ that starts from $W$, leaves $W$ and then returns to $W$. The following is a relevant observation:
\begin{rem}\label{rem:normal}
$W$ is $X$-normal if and only if the vertex-sets of the strongly connected components of $G \setminus X$ can be enumerated as $W_1, W_2, ..., W_k$ in such a way that:
\begin{itemize}
\item if $1 \le i < j \le k$, then no edge of $G$ has head in $W_i$ and tail in $W_j$, and
\item either $W = \emptyset$, or $W = W_i \cup W_{i+1} \cup ... \cup W_j$ for some integers $i,j$ with $1 \le i \le j \le k$.
\end{itemize} 
\end{rem}

\section{Digraph Width Parameters}\label{sec:digraphwidthparameters}

\subsection{DAG-width and directed pathwidth}\label{sec:dgw-dpw}

DAG-decomposition and DAG-width were introduced by Berwanger et al. \cite{dagw-definition1}, and independently by Obdr\v{z}\'{a}lek \cite{dagw-definition2}.

\begin{definition}[DAG-decomposition and DAG-width \cite{dagw-definition1}\cite{dagw-definition2}]\label{definition:DAG-decomposition}
A {\it DAG-decomposition} of a digraph $G$ is a pair $\mathcal{D} = (T,\mathcal{W})$ where $T$ is a DAG, and $\mathcal{W} = (W_i)_{i \in V(T)}$ is a family of subsets (node bags) of $V(G)$, such that:
\begin{itemize}
\item $\bigcup_{i \in V(T)} W_i = V(G)$. \hfill{\rm{\sf (DGW-1)}}
\item For all nodes $i,j,k \in V(T)$, if $i \preceq j \preceq k$, then $W_i \cap W_k \subseteq W_j$. \hfill{\rm{\sf (DGW-2)}}
\item For all arcs $(i,j) \in E(T)$, $W_i \cap W_j$ guards $W_{\succeq j} \backslash W_i$. For any root $r \in V(T)$,\\
$W_{\succeq r}$ is guarded by $\emptyset$. \hfill{\rm{\sf (DGW-3)}}
\end{itemize}
The width of a DAG-decomposition $\mathcal{D}=(T,\mathcal{W})$ is defined as $\max\{|W_i|:i \in V(T)\}$. The {\it DAG-width} of $G$, denoted by $dgw(G)$, is the minimum width over all possible DAG-decompositions of $G$.
\end{definition}

In order to define {\it directed path decomposition} and {\it directed pathwidth}, we restrict the underlying decomposition $T$ to be a directed path graph in \defref{definition:DAG-decomposition}. We denote the directed pathwidth of $G$ by $dpw(G)$.\\
\indent Directed path decomposition and directed pathwidth were introduced by Reed, Seymour and Thomas \cite{dpw-definition} as a generalization of pathwidth to digraphs. In their definition they subtracted one from the width, which is consistent with the definition of pathwidth of undirected graphs. Next we show that one may replace the condition {\sf (DGW-3)} by the following equivalent condition:\\

\noindent {\it For all arcs $(u,v) \in E(G)$, there exist nodes $i, j \in V(T)$ such that $i \preceq j$, $u \in W_i$, and, $v \in W_j$. \hfill{\sf (DPW)}}\\

\begin{prop}\label{prop:dgw3-equivalentto-dpw}
Let $G$ be a digraph. Let $\mathcal{D}=(T,\mathcal{W})$ where $T$ is a directed path graph and $\mathcal{W}=(W_i)_{i \in V(T)}$ is a family of subsets (node bags) of $V(G)$, such that it satisifies conditions {\sf (DGW-1)} and {\sf (DGW-2)}. Then $\mathcal{D}$ satisfies condition {\sf (DGW-3)} if and only if it satisfies condition {\sf (DPW)}.
\end{prop}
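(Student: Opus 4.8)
The plan is to exploit the strong structure that making $T$ a directed path graph forces on $(T,\mathcal{W})$. Write the nodes of $T$ as $v_1 \prec v_2 \prec \cdots \prec v_m$, where $v_1$ is the unique root, so that $v_a \preceq v_b$ if and only if $a \le b$. The first step is to observe that {\sf (DGW-1)} together with {\sf (DGW-2)} implies that for every $x \in V(G)$ the set of indices $a$ with $x \in W_{v_a}$ is a nonempty contiguous interval $[\ell(x),r(x)]$: nonemptiness is {\sf (DGW-1)}, and if $x \in W_{v_a} \cap W_{v_c}$ with $a < b < c$ then $v_a \preceq v_b \preceq v_c$ and {\sf (DGW-2)} gives $x \in W_{v_b}$. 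The second step is to strip {\sf (DGW-3)} down to what actually needs checking. Its root clause is automatic, since $W_{\succeq v_1} = V(G)$ by {\sf (DGW-1)} and $\emptyset$ trivially guards $V(G)$; and for each arc $(v_a,v_{a+1})$ of $T$ the ``$W \cap X = \emptyset$'' part of ``$W_{v_a}\cap W_{v_{a+1}}$ guards $W_{\succeq v_{a+1}}\setminus W_{v_a}$'' holds because $W_{v_a}\cap W_{v_{a+1}} \subseteq W_{v_a}$. So, modulo the interval structure, {\sf (DGW-3)} reduces to the statement: for every arc $(v_a,v_{a+1})$ and every $(u,v)\in E(G)$, if $u \in W_{\succeq v_{a+1}}\setminus W_{v_a}$ then $v \in (W_{\succeq v_{a+1}}\setminus W_{v_a}) \cup (W_{v_a}\cap W_{v_{a+1}})$.

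For {\sf (DGW-3)} $\Rightarrow$ {\sf (DPW)} I would prove the contrapositive. If {\sf (DPW)} fails for an arc $(u,v)\in E(G)$, then every bag containing $u$ lies strictly after every bag containing $v$; in index terms $\ell(u) > r(v)$, so in particular $r(v) < m$ and the arc $(v_{r(v)}, v_{r(v)+1})$ exists in $T$. Put $a = r(v)$. Then $u \in W_{\succeq v_{a+1}}\setminus W_{v_a}$ because $\ell(u) \ge a+1$, whereas $v \in W_{v_a}$ and $v \notin W_{v_{a+1}}$, so $v$ lies in neither $W_{\succeq v_{a+1}}\setminus W_{v_a}$ nor $W_{v_a}\cap W_{v_{a+1}}$, contradicting the guarding clause of {\sf (DGW-3)} at that arc.

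For {\sf (DPW)} $\Rightarrow$ {\sf (DGW-3)} it suffices, by the reduction above, to check the displayed implication. Fix an arc $(v_a,v_{a+1})$, a vertex $u \in W_{\succeq v_{a+1}}\setminus W_{v_a}$, and an arc $(u,v)\in E(G)$. Since $[\ell(u),r(u)]$ meets $\{a+1,\dots,m\}$ but avoids $a$, it lies entirely to the right of $a$, i.e.\ $\ell(u) \ge a+1$. Now {\sf (DPW)} provides indices $i \le j$ with $u \in W_{v_i}$ and $v \in W_{v_j}$; then $i \ge \ell(u) \ge a+1$, hence $j \ge a+1$, hence $v \in W_{\succeq v_{a+1}}$. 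If $v \notin W_{v_a}$ we are done. Otherwise $a$ and $j$ both lie in $[\ell(v),r(v)]$ with $a < j$, so by contiguity $a+1 \in [\ell(v),r(v)]$ and thus $v \in W_{v_a}\cap W_{v_{a+1}}$; either way $v$ lies in the required set.

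Everything here is bookkeeping once the interval structure of the sets $W_{v_a}$ is established, which is really the only conceptual step. The points requiring a little care are the boundary case in the forward direction (ensuring the arc $(v_{r(v)}, v_{r(v)+1})$ actually exists, equivalently $r(v) < m$) and the translation between the ``$\exists\, i \preceq j$'' form of {\sf (DPW)} and statements about the extreme bags $\ell(\cdot),r(\cdot)$ of the endpoints of an arc of $G$; I do not expect a genuine obstacle.
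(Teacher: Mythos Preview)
Your proof is correct and takes essentially the same approach as the paper's: both directions are argued by locating an arc of $T$ at which the guarding clause either fails or must be verified, using the contiguity of $\{a : x \in W_{v_a}\}$ implied by {\sf (DGW-2)}. The only cosmetic differences are that you make the interval structure $[\ell(x),r(x)]$ explicit up front and, in the forward direction, pivot on the arc $(v_{r(v)},v_{r(v)+1})$ whereas the paper pivots on $(v_{j-1},v_j)$ for $j$ the least index with $u\in W_{v_j}$; both choices work for the same reason.
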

\begin{proof}
Say $V(T)=\{1, ..., l\}$, where the arcs are $(i,i+1)$ for each $i \in \{1,...,l-1\}$. Suppose $\mathcal{D}$ satisfies condition {\sf (DGW-3)}. We show that $\mathcal{D}$ satisifies condition {\sf (DPW)} as well, i.e. for each $(u,v)$ in $E(G)$, there exist nodes $i,j$ in $V(T)$ such that $i \le j$, $u$ appears in $W_i$, and $v$ appears in $W_j$. Suppose not. Let $(u,v)$ be an arc of $G$ that violates this condition. Let $i$ be the largest index such that $v$ appears in $W_i$, and $j$ be the smallest index such that $u$ appears in $W_j$. It follows that $i < j$. Note that $v$ does not lie in $W_{j-1} \cap W_j$. Also, $W_{\ge j} \setminus W_{j-1}$ contains $u$ but not $v$. The existence of arc $(u,v)$ in $G$ implies that $W_{j-1} \cap W_j$ does not guard $W_{\ge j}\setminus W_i$, contradicting our assumption that $\mathcal{D}$ satisfies condition {\sf (DGW-3)}.\\
\indent Now, suppose $\mathcal{D}$ satisfies condition {\sf (DPW)}. We show that $\mathcal{D}$ satisfies condition {\sf (DGW-3)} as well, i.e. for each arc $(i,i+1)$ in $E(T)$, $W_i \cap W_{i+1}$ guards $W_{\ge i+1} \setminus W_i$, and that $W_{\ge 1}$ is guarded by $\emptyset$. Suppose not. Observe that $W_{\ge 1} = V(G)$ (due to {\sf (DGW-1)}) is trivially guarded by $\emptyset$. Let $(i,i+1)$ be an arc of $T$ such that $W_i \cap W_{i+1}$ does not guard $W_{\ge i+1} \setminus W_i$. From \defref{definition:guarding}, this implies that there exists an arc $(u,v) \in E(G)$ such that $u$ lies in $W_{\ge i+1} \setminus W_i$, and $v$ does not lie in $(W_{\ge i+1} \setminus W_i) \cup (W_i \cap W_{i+1})$. It follows that the largest index $j$ such that $v$ appears in $W_j$ must satisfy $j \le i$. Also, $i$ is the smallest index such that $u$ appears in $W_i$ (due to {\sf (DGW-2)}). We conclude that the existence of arc $(u,v)$ in $G$ violates the condition {\sf (DPW)}, contradicting our assumption.
\end{proof}
It follows from the definitions that a directed path decomposition is also a DAG-decomposition. In this manner, DAG-width generalizes directed pathwidth:
\begin{prop}\label{prop:dpw dgw relation}\footnote{This was shown previously by Berwanger et al.\cite{dagw-journal} using the original definition of directed pathwidth.}
For a digraph $G$, $dgw(G) \le dpw(G)$.
\end{prop}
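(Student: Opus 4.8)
The plan is to observe that, with the definitions adopted in this paper, the class of directed path decompositions of $G$ is literally a subclass of the class of DAG-decompositions of $G$, so the inequality falls out of the fact that $dgw(G)$ is defined as a minimum over a strictly larger collection of objects.

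First I would recall that a directed path decomposition of $G$ is, by definition, a pair $\mathcal{D}=(T,\mathcal{W})$ satisfying {\sf (DGW-1)}, {\sf (DGW-2)} and {\sf (DGW-3)} in which $T$ is additionally required to be a directed path graph, and that its width is again $\max\{|W_i| : i\in V(T)\}$ --- exactly the same width functional used for DAG-decompositions. Next I would note that a directed path graph is an arborescence whose underlying undirected graph is a path, and in particular it is a DAG. Hence every directed path decomposition of $G$ is, with no modification whatsoever, also a DAG-decomposition of $G$ of the same width.

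Finally I would take $\mathcal{D}$ to be an optimal directed path decomposition of $G$, so that $\mathrm{width}(\mathcal{D})=dpw(G)$. Since $\mathcal{D}$ is also a DAG-decomposition of $G$, and $dgw(G)$ is the minimum width over all DAG-decompositions of $G$, we obtain $dgw(G)\le \mathrm{width}(\mathcal{D})=dpw(G)$, as desired.

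There is essentially no hard step here; the only point requiring a moment's care is to make sure the two width parameters use the same normalization (neither subtracts one, in the conventions of this paper), so that the set-theoretic inclusion of decomposition classes really does translate into the claimed numerical inequality. If instead one wishes to argue from the original definition of directed pathwidth, which uses {\sf (DPW)} in place of {\sf (DGW-3)}, then the same argument goes through once one first invokes \propref{prop:dgw3-equivalentto-dpw} to see that the two formulations of a directed path decomposition coincide.
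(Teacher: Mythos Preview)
Your argument is correct and is exactly the one the paper intends: it notes just before the proposition that ``a directed path decomposition is also a DAG-decomposition,'' and the inequality is immediate from that inclusion together with the identical width functionals. Your remark about normalization and the optional detour through \propref{prop:dgw3-equivalentto-dpw} is accurate but unnecessary here, since the paper defines directed path decompositions via {\sf (DGW-3)} directly.
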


\subsection{Kelly-width and Kelly pathwidth}
\indent Kelly-decomposition and Kelly-width were introduced by Hunter and Kreutzer \cite{kellywidth-definition}.
\begin{definition}[Kelly-decomposition and Kelly-width \cite{kellywidth-definition}]\label{definition:Kelly-decomposition}
A {\it Kelly-decomposition} of a digraph $G$ is a triple $\mathcal{D} = (T, \mathcal{W}, \mathcal{X})$ where $T$ is a DAG, and $\mathcal{W}=(W_i)_{i\in V(T)}$ and $\mathcal{X}=(X_i)_{i\in V(T)}$ are families of subsets (node bags) of $V(G)$, such that:
\begin{itemize}
\item $\mathcal{W}$ is a partition of $V(G)$. \hfill{\rm{\sf (KW-1)}}
\item For all nodes $i \in V(T), X_i$ guards $W_{\succeq i}$. \hfill{\rm{\sf (KW-2)}}
\item For each node $i \in V(T)$, the children of $i$ can be enumerated as $j_1, ... , j_s$ so that for each $j_q$, $X_{j_q} \subseteq W_i \cup X_i \cup \bigcup_{p<q} W_{\succeq j_p}$. Also, the roots of $T$ can be enumerated as $r_1, r_2, ...$ such that for each root $r_q$, $W_{r_q} \subseteq \bigcup_{p<q} W_{\succeq r_p}$. \hfill{\rm{\sf (KW-3)}}
\end{itemize}
The width of a Kelly-decomposition $\mathcal{D}=(T,\mathcal{W},\mathcal{X})$ is defined as $\max\{|W_i \cup X_i| : i \in V(T)\}$. The {\it Kelly-width} of $G$, denoted by $kw(G)$, is the minimum width over all possible Kelly-decompositions of $G$.
\end{definition}

In order to define {\it Kelly path decomposition} and {\it Kelly pathwidth}, we restrict the underlying decomposition $T$ to be a directed path graph in \defref{definition:Kelly-decomposition}. We denote the Kelly pathwidth of $G$ by $kpw(G)$. In this case, the condition {\sf (KW-3)} simplifies to:\\

\noindent {\it For all arcs $(i,j) \in E(T)$, $X_{j} \subseteq W_i \cup X_i$. \hfill {\sf (KPW)}}\\

\indent It follows from the definitions that a Kelly path decomposition is also a Kelly-decomposition. In this manner, Kelly-width generalizes Kelly pathwidth:
\begin{prop}\label{prop:kw-kpw}
For a digraph $G$, $kw(G) \le kpw(G)$.
\end{prop}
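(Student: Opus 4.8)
The plan is to mimic exactly the relationship between DAG-decompositions and directed path decompositions established in \propref{prop:dpw dgw relation}, now in the setting of Kelly-decompositions. The key observation is structural: a Kelly path decomposition of $G$ is, by definition, a triple $\mathcal{D}=(T,\mathcal{W},\mathcal{X})$ satisfying {\sf (KW-1)}, {\sf (KW-2)} and {\sf (KPW)}, where $T$ is a directed path graph; since a directed path graph is in particular a DAG, to prove $kw(G)\le kpw(G)$ it suffices to show that any such $\mathcal{D}$ is also a legitimate Kelly-decomposition of $G$ (its width is unchanged, being $\max\{|W_i\cup X_i|\}$ in both cases). So the real content is: a triple satisfying {\sf (KW-1)}, {\sf (KW-2)} and {\sf (KPW)} on a directed path graph also satisfies {\sf (KW-3)}.

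First I would set up notation as in the proof of \propref{prop:dgw3-equivalentto-dpw}: write $V(T)=\{1,\dots,l\}$ with arcs $(i,i+1)$ for $i\in\{1,\dots,l-1\}$, so that node $1$ is the unique root and each non-root node $i+1$ has the single parent $i$. Then I would verify {\sf (KW-3)} directly against this structure. For the children condition: a node $i$ has at most one child, namely $i+1$ (or none, if $i=l$). If it has the child $j_1=i+1$, the enumeration of children is trivial ($s\le 1$), and the required containment $X_{j_1}\subseteq W_i\cup X_i\cup\bigcup_{p<1}W_{\succeq j_p}=W_i\cup X_i$ is precisely what {\sf (KPW)} asserts for the arc $(i,i+1)$. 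For the roots condition: $T$ has exactly one root, $r_1=1$, so the enumeration of roots is again trivial, and the required containment $W_{r_1}\subseteq\bigcup_{p<1}W_{\succeq r_p}=\emptyset$ — wait, this says $W_1\subseteq\emptyset$, which is generally false, so I would instead note that the roots condition is vacuous when there is a single root $r_1$: the constraint $W_{r_q}\subseteq\bigcup_{p<q}W_{\succeq r_p}$ is imposed only for $q\ge 2$ in any meaningful reading (for $q=1$ it would force $W_{r_1}=\emptyset$ for every decomposition, contradicting e.g.\ {\sf (KW-1)}), so with a unique root there is nothing to check.

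The main obstacle, such as it is, is getting the reading of {\sf (KW-3)} right for the degenerate single-root / single-child case — i.e.\ confirming that the roots clause is intended to constrain only the roots after the first in the chosen enumeration, so that it is automatically satisfied when $T$ is a directed path graph. Once that is settled, the argument is a one-line check: {\sf (KW-1)} and {\sf (KW-2)} carry over verbatim (they make no reference to the shape of $T$), and {\sf (KPW)} is exactly the specialization of the children part of {\sf (KW-3)} to a chain. Hence every Kelly path decomposition of width $w$ is a Kelly-decomposition of width $w$, so $kw(G)\le kpw(G)$, which is what we wanted to show.
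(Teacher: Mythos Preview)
Your proposal is correct and takes exactly the approach the paper intends: the paper does not give a separate proof but simply states (just before the proposition) that ``a Kelly path decomposition is also a Kelly-decomposition'' and that {\sf (KW-3)} ``simplifies to'' {\sf (KPW)} on a directed path graph, which is precisely what you unpack. Your reading of the roots clause of {\sf (KW-3)} as vacuous for a single root is consistent with the paper's own simplification to {\sf (KPW)}.
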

Now, we show that for a digraph $G$ its Kelly pathwidth equals its directed pathwidth. The high level idea of our proof is as follows. Given a directed path decomposition of $G$, there is a natural way to construct a Kelly path decomposition of $G$ which has the same width, and vice versa.
\begin{theorem}\label{thm:kpw-equals-dpw}
For any digraph $G$, $kpw(G) = dpw(G)$.
\end{theorem}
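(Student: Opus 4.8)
The plan is to prove the two inequalities $kpw(G)\le dpw(G)$ and $dpw(G)\le kpw(G)$ separately; in each direction I would convert one decomposition into the other while keeping the underlying directed path graph $T$ unchanged (so the ``directed path graph'' requirement needs no further thought) and without increasing the width.

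For $kpw(G)\le dpw(G)$: starting from a directed path decomposition $(T,\mathcal{W})$ with $V(T)=\{1,\dots,l\}$ and arcs $(i,i+1)$, I would set $W_i':=W_i\setminus\bigcup_{h<i}W_h$ and $X_i':=W_{i-1}\cap W_i$ for $i\ge 2$, with $X_1':=\emptyset$. The $W_i'$ telescope to a partition of $V(G)$, which is \textsf{(KW-1)}. Using \textsf{(DGW-2)} (so that each vertex occupies a consecutive block of bags) one gets $W'_{\succeq i}=W_{\ge i}\setminus W_{i-1}$, and then \textsf{(DGW-3)} says exactly that $X_i'=W_{i-1}\cap W_i$ guards this set, giving \textsf{(KW-2)}; \textsf{(DGW-2)} also yields $W_i\cap W_{i+1}\subseteq(W_i\setminus\bigcup_{h<i}W_h)\cup(W_{i-1}\cap W_i)=W_i'\cup X_i'$, which is \textsf{(KPW)}. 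Since $W_i'\cup X_i'\subseteq W_i$, the width cannot grow. (If some $W_i'$ turns out empty, I would delete that node and splice its two incident arcs into one; a short check shows \textsf{(KW-1)}--\textsf{(KPW)} are preserved and $T$ stays a directed path graph.) Hence $kpw(G)\le dpw(G)$.

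For $dpw(G)\le kpw(G)$: starting from a Kelly path decomposition $(T,\mathcal{W},\mathcal{X})$ with arcs $(i,i+1)$, I would set $W_i':=W_i\cup X_i$. Then \textsf{(DGW-1)} is clear. For \textsf{(DGW-2)} I would again argue that each vertex occupies a consecutive block of bags: a vertex $v$ lies in a unique $W_m$ (as $\mathcal{W}$ is a partition); $X_1=\emptyset$ because $X_1$ guards $W_{\succeq 1}=V(G)$; and if $v\in X_k$ then \textsf{(KW-2)} forces $m<k$, while iterating \textsf{(KPW)}, $X_j\subseteq W_{j-1}\cup X_{j-1}$, downward from $j=k$ (using $v\notin W_{j-1}$ whenever $j-1\ne m$) forces $v\in X_j$ for all $j$ with $m<j\le k$; hence $\{\,j:v\in W_j'\,\}=\{m,m+1,\dots,M\}$, where $M$ is the largest index with $v\in X_M$ (or $M=m$), which is \textsf{(DGW-2)}. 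For \textsf{(DGW-3)} I would instead verify the equivalent condition \textsf{(DPW)} from \propref{prop:dgw3-equivalentto-dpw}: given an arc $(u,v)\in E(G)$ with $u\in W_a$ and $v\in W_b$, if $a\le b$ one may take $i=a,\ j=b$; if $a>b$, then $X_{b+1}$ guards $W_{\succeq b+1}\ni u$, so the arc forces $v\in W_{\succeq b+1}\cup X_{b+1}$, and since $v\notin W_{\succeq b+1}$ we get $v\in X_{b+1}\subseteq W_{b+1}'$, so $i=b+1,\ j=a$ works. The width is unchanged, since $W_i'=W_i\cup X_i$. Putting the two directions together gives $kpw(G)=dpw(G)$.

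I expect the routine part to be the direction $kpw(G)\le dpw(G)$, which is essentially set bookkeeping once the slices $W_i\setminus\bigcup_{h<i}W_h$ are chosen. The main obstacle should be in $dpw(G)\le kpw(G)$: proving that the enlarged bags $W_i\cup X_i$ still satisfy the consecutive-bags condition \textsf{(DGW-2)} requires the observation that \textsf{(KPW)} propagates membership in the $X$-bags monotonically back toward the single bag that actually contains the vertex, and the backward-arc case of \textsf{(DPW)} is the one place where the guarding axiom \textsf{(KW-2)} is genuinely needed.
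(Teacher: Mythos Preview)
Your constructions are exactly those of the paper: in one direction set $W'_i=W_i\setminus W_{i-1}$ and $X'_i=W_{i-1}\cap W_i$ (your $W_i\setminus\bigcup_{h<i}W_h$ equals $W_i\setminus W_{i-1}$ by \textsf{(DGW-2)}), and in the other set $W_i:=W'_i\cup X'_i$; the verifications follow the same route. The paper handles possibly empty $W'_i$ by preprocessing the directed path decomposition (deleting a node whenever $W_{i+1}\subseteq W_i$) rather than post-processing the Kelly path decomposition, but your splicing works just as well.

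There is, however, one genuine slip in your check of \textsf{(DPW)} for the direction $dpw(G)\le kpw(G)$. In the case $a>b$ you apply \textsf{(KW-2)} at index $b+1$ and obtain $v\in X_{b+1}\subseteq W'_{b+1}$, then declare that $i=b+1$, $j=a$ works. But \textsf{(DPW)} requires $u\in W'_i$ and $v\in W'_j$ with $i\le j$; what you have exhibited is $u\in W'_a$ and $v\in W'_{b+1}$, which has the roles of $u$ and $v$ reversed (it would verify \textsf{(DPW)} for the arc $(v,u)$, not $(u,v)$). Moreover, $v\in X_{b+1}$ does not propagate \emph{upward} to $v\in X_a$ via \textsf{(KPW)}. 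The fix is immediate: apply \textsf{(KW-2)} at index $a$ instead. Since $u\in W_a\subseteq W_{\succeq a}$ and $X_a$ guards $W_{\succeq a}$, the arc $(u,v)$ forces $v\in W_{\succeq a}\cup X_a$; as $v\in W_b$ with $b<a$ and $\mathcal{W}$ is a partition, $v\notin W_{\succeq a}$, hence $v\in X_a\subseteq W'_a$, and $i=j=a$ works. This is exactly the paper's argument, phrased there contrapositively.
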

\begin{proof}
\indent First we show that $kpw(G) \le dpw(G)$. Suppose $\mathcal{D}=(T,\mathcal{W})$ is a {\it directed path decomposition} of $G$. Say $V(T)=\{1, ..., l\}$, where the arcs are $(i,i+1)$ for each $i \in \{1,...,l-1\}$. Note that, if for some $1 \le i \le l-1$ we have $W_{i+1} \subseteq W_i$, then we can delete node $i$ from $T$ and add an arc $(i-1,i+1)$ to get a directed path graph $T_1$. Let $\mathcal{W}_1$ be the restriction of $\mathcal{W}$ to $V(T_1)$. It is easy to see that $(T_1, \mathcal{W}_1)$ is a directed path decompostion of $G$, of width no more than that of $\mathcal{D}$. Thus, we may assume that $W_{i+1} \nsubseteq W_i$ for all $1 \le i < l$. Now, we describe a Kelly path decomposition $\mathcal{D'}=(T,\mathcal{W}',\mathcal{X}')$ such that width of $\mathcal{D'}$ is the same as that of $\mathcal{D}$. We set $\mathcal{W}'=(W'_i)_{i \in V(T)}$ and $\mathcal{X}'=(X'_i)_{i \in V(T)}$ as follows:
\begin{itemize}
\item $W'_1 := W_1$, and for each $i \in \{2, ..., l\}$, $W'_i := W_i \setminus W_{i-1}$.
\item $X'_1 := \emptyset$, and for each $i \in \{2, ..., l\}$, $X'_i := W_i \cap W_{i-1}$.
\end{itemize}
Observe that $\mathcal{D}'$ satisifes conditions {\sf (KW-1)} and {\sf (KPW)}.\\
\indent We now show that $\mathcal{D}'$ satisfies condition {\sf (KW-2)}, i.e. for each $i \in V(T)$, $X'_i$ guards $W'_{\ge i}:=\bigcup_{k \ge i} W_k$. Suppose not. Then for some $i \in V(T)$, there is an arc $(u,v) \in E(G)$ such that $u \in W'_{\ge i}$ and $v \notin X'_i \cup W'_{\ge i}$. Note that $u$ is contained in $W'_j$ where $j \ge i$ is the smallest integer such that $u \in W_j$. Since $\mathcal{D}'$ satisfies condition {\sf (KPW)}, $v \notin X'_k$ for any $k \ge i$. Hence, the largest integer $k$ such that $v \in W_k$ must satisfy $k < i$. However, this violates condition {\sf (DPW)} for the arc $(u,v)$. Hence, $\mathcal{D}'$ is a Kelly path decomposition of $G$.\\
\indent Next we show that $dpw(G) \le kpw(G)$. Suppose $\mathcal{D}'=(T, \mathcal{W}', \mathcal{X}')$ is a {\it Kelly path decomposition} of $G$. Say $V(T)=\{1,...,l\}$, where the arcs are $(i,i+1)$ for each $i \in \{1,...,l-1\}$. Now, we describe a directed path decomposition $\mathcal{D}=(T,\mathcal{W})$ such that width of $\mathcal{D}$ is the same as that of $\mathcal{D}'$. We set $\mathcal{W}=(W_i)_{i \in V(T)}$ such that for each $i \in V(T)$, $W_i:=W'_i\cup X'_i$. Observe that $\mathcal{D}$ satisfies condition {\sf (DGW-1)}.\\
\indent We now show that $\mathcal{D}$ satisfies condition {\sf (DGW-2)}, i.e. for $i < j < k$, if $v$ lies in $W_i \cap W_k$, then $v$ must appear in $W_j$. Note that $W_i \cap W_k=(W'_i \cup X'_i) \cap (W'_k \cup X'_k) = (W'_i \cap X'_k) \cup (X'_i \cap X'_k)$, where the latter equality follows since $W'_i \cap W'_k = \emptyset$ (due to {\sf (KPW-1)}), and $X'_i \cap W'_k = \emptyset$ (due to {\sf (KPW-2)}). It follows that if $v$ lies in $W_i \cap W_k$, then $v$ also lies in $X'_k$. It follows from {\sf (KPW-3)} that $v$ lies in $W'_{k-1} \cup X'_{k-1}$. Also, $v$ must appear in one of $W'_i$ and $X'_i$. We consider these cases separately. If $v$ appears in $W'_i$, then $v$ does not appear in $W'_{k-1}$ (due to {\sf (KPW-1)}). Otherwise, $v$ appears in $X'_i$. Again, $v$ can not appear in $W'_{k-1}$ since $X'_i$ guards $W'_{\ge i}$. Hence, in either case, $v$ must appear in $X'_{k-1}$. It follows that $v$ lies in $(W'_i \cap X'_{k-1}) \cup (X'_i \cap X'_{k-1})$. Applying this argument repeatedly, we conclude that $v$ lies in $X'_j$. This implies that $v$ appears in $W_j$.\\
\indent We now show that $\mathcal{D}$ satisfies condition {\sf (DPW)}, i.e. for each $(u,v)$ in $E(G)$, there exist nodes $i,j$ in $V(T)$ such that $i \le j$, $u$ appears in $W_i$, and $v$ appears in $W_j$. Suppose not. Let $(u,v)$ be an arc of $G$ that violates this condition. Let $i$ be the largest index such that $v$ appears in $W_i$, and $j$ be the smallest index such that $u$ appears in $W_j$. It follows that $i < j$. This means that $u$ lies in $W'_j \cup X'_j$. If $u$ lies in $X'_j$, then by {\sf (KPW-1)} and {\sf (KPW-2)}, we conclude that $u$ lies in $W'_k$ for some $k < j$. But this contradicts our choice of $j$. Thus, $u$ must lie in $W'_j$. Note that $v$ lies neither in $W'_{\ge j}$, nor in $X'_j$. This implies that the arc $(u,v)$ violates the condition {\sf (KPW-2)} for node $j$. Hence, $\mathcal{D}$ is a {\it directed path decomposition} of $G$. This completes the proof.
\end{proof}

It follows that Kelly-width generalizes directed pathwidth. Next, we show that the gap between Kelly pathwidth and Kelly-width can be arbitrarily large. Berwanger et al. \cite[Proposition 36]{dagw-journal} show a family of digraphs with arbitrarily large directed pathwidth and DAG-width $2$. It is easy to show that this family of digraphs gives an analogous result for Kelly pathwidth and Kelly-width. However, this relies on the notion of cops-and-robber games. We provide the details in \apref{ap:cops-and-robber}.

\begin{prop}\label{prop:kpw-kw-gap}
There exist a family of digraphs with arbitrarily large Kelly pathwidth and Kelly-width $2$.
\end{prop}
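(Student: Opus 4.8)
The plan is to leverage the existing result of Berwanger et al.\ \cite[Proposition 36]{dagw-journal}, which exhibits a family $(G_k)_{k\ge 1}$ of digraphs with $dpw(G_k)\to\infty$ while $dgw(G_k)=2$. By \thref{thm:kpw-equals-dpw} we immediately get $kpw(G_k)=dpw(G_k)\to\infty$, so the Kelly pathwidth side of the claim is free. The only thing that needs work is showing $kw(G_k)=2$ for this same family. Since $dgw(G)\le dpw(G)$ for all $G$ and the family has unbounded $dpw$, I cannot simply invoke a width inequality; I need a direct argument that Kelly-width stays small.

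First I would recall (pointing to \apref{ap:cops-and-robber}, where cops-and-robber games are developed) the characterizations of DAG-width and Kelly-width in terms of cops-and-robber games: DAG-width $k$ corresponds to $k$ cops catching a \emph{visible, dynamic} robber, and Kelly-width $k$ corresponds to $k$ cops catching an \emph{invisible, inert} robber, up to the usual $\pm 1$ bookkeeping in the definitions. The key observation I would use is that a robber that is invisible and inert is, if anything, easier for the cops to catch than one that is visible and dynamic — in particular a winning strategy for $k$ cops against a visible, dynamic robber can be used to construct a winning strategy for $k$ cops against an invisible, inert robber on the same digraph. Concretely, the cops can ignore the invisible robber's (unknown) position and just play the monotone visible-robber strategy, which by monotonicity sweeps the whole graph and hence captures an inert robber as well; this gives $kw(G)\le dgw(G)$ for every digraph $G$, or at worst $kw(G)\le dgw(G)+O(1)$, which already suffices since we only need an absolute bound. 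Applying this to $G_k$ gives $kw(G_k)\le 2$ (or $\le$ a fixed constant).

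For the matching lower bound $kw(G_k)\ge 2$, I would just note that any digraph containing a directed cycle has Kelly-width at least $2$ (a single cop cannot guard a cycle: removing one vertex from the "ahead" set fails condition {\sf (KW-2)}), and the digraphs $G_k$ in the Berwanger et al.\ family certainly contain directed cycles — indeed if some $G_k$ were a DAG its directed pathwidth would be bounded, contradicting unboundedness of $dpw(G_k)$. Combining the two bounds yields $kw(G_k)=2$ for the whole family (after discarding finitely many small members if the constant from the game reduction is larger than $2$, one instead takes the subfamily where the bound is tight, or re-examines the explicit construction of \cite{dagw-journal} to verify the Kelly-decomposition of width $2$ directly).

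The main obstacle I expect is the precise cops-and-robber reduction: making the claim "$kw(G)\le dgw(G)$" (or with a small additive slack) rigorous requires being careful about the exact game formulations — monotone versus non-monotone strategies, the off-by-one conventions between $dgw$, $kw$ and their game values, and whether the visible-dynamic winning strategy can be assumed monotone on the family in question. If the clean inequality $kw\le dgw$ turns out to be delicate, the safe fallback is to go back to the explicit construction in \cite[Proposition 36]{dagw-journal} and write down an explicit Kelly-decomposition of width $2$ for each $G_k$ by hand, verifying {\sf (KW-1)}--{\sf (KW-3)} directly; this is routine but tedious, and I would relegate it to the appendix alongside the cops-and-robber discussion.
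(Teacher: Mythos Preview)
Your approach has a genuine gap at the step where you claim $kw(G)\le dgw(G)$ via the game reduction. A monotone winning strategy for $k$ cops against a \emph{visible} dynamic robber is adaptive: at every round the cops' next move depends on where they \emph{see} the robber go. Such a strategy is not a single deterministic sweep of the graph, so you cannot simply ``replay'' it against an invisible robber while ignoring his position. Consequently the sentence ``by monotonicity sweeps the whole graph and hence captures an inert robber as well'' does not go through, and in fact the inequality $kw(G)\le dgw(G)$ (even with additive slack) is, to my knowledge, not known for general digraphs. Your fallback of writing down explicit Kelly-decompositions would work, but is unnecessary.

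What you are missing is that the family in \cite[Proposition~36]{dagw-journal} consists of \emph{symmetric} digraphs: it is $(\overline{T_i})_{i\ge 2}$ where $T_i$ is the undirected complete ternary tree of height $i$. For any undirected graph $H$, the invisible--inert cops-and-robber game on $\overline{H}$ is exactly the undirected game on $H$, and by Dendris et al.\ \cite{tw-invisible-inert} this game characterizes the treewidth of $H$; hence $kw(\overline{H})=tw(H)+1$ (this is \propref{prop:equivalences} in the appendix). Since trees have treewidth $1$, one gets $kw(\overline{T_i})=2$ immediately, with no need to compare $kw$ and $dgw$ on general digraphs. The unbounded Kelly pathwidth then follows, as you say, from $kpw=dpw$ and the fact that $T_i$ has pathwidth $i$.
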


\subsection{Directed treewidth}
Arboreal decomposition and directed treewidth were introducted by Johnson et al. \cite{dtw-definition}.
\begin{definition}[Arboreal decomposition and directed treewidth\cite{dtw-definition}]\label{definition:arboreal-decomposition}
An {\it arboreal decomposition} of a digraph $G$ is a triple $\mathcal{D} = (T, \mathcal{W}, \mathcal{X})$, where $T$ is an arborescence, and $\mathcal{W}=(W_i)_{i\in V(T)}$ is a family of subsets (node bags) of $V(G)$, and $\mathcal{X}=(X_e)_{e\in E(T)}$ is a family of subsets (arc bags) of $V(G)$, such that:
\begin{itemize}
\item { $\mathcal{W}$ is a partition of $V(G)$. \hfill{\rm{\sf (DTW-1)}}  }
\item { For each arc $e \in E(T)$, $W_{\succ e}$ is $X_e$-normal. \hfill{\rm{\sf (DTW-2)}} }
\end{itemize}
The width of an arboreal decomposition $\mathcal{D}=(T,\mathcal{W},\mathcal{X})$ is defined as $\max\{|W_i \cup X_{\sim i}|:i \in V(T)\} - 1$. The {\it directed treewidth} of $G$, denoted by $dtw(G)$, is the minimum width over all possible arboreal decompositions of $G$.
\end{definition}

In \secref{sec:approx-dtw}, we will construct an arboreal decomposition incrementally. For this purpose, given a set $U \subseteq V(G)$, we define an {\it arboreal decomposition with respect to $U$}. To do so, replace condition {\sf (DTW-1)} by: $\mathcal{W}$ is a partition of $U$ (in \defref{definition:arboreal-decomposition}). Note that when $U=V(G)$, this is an arboreal decomposition of $G$.
\begin{rem}\label{rem:trivial}
For a digraph $G$, consider the decomposition $\mathcal{D}$ such that the underlying DAG $T$ has a single node, whose corresponding node bag is $V(G)$. Note that $\mathcal{D}$ satisfies the properties of each digraph decomposition mentioned above. We refer to such a decomposition as the {\it trivial decomposition} of $G$. Moreover, if the unique node bag is $U$ for some $U \subseteq V(G)$, then the decomposition is a {\it trivial arboreal decomposition} of $G$ {\it with respect to $U$}.
\end{rem}

\section{Approximation Algorithms}\label{sec:approximation}
\subsection{Balanced separators}\label{sec:separator approximation}
Our algorithms for approximating digraph width parameters are similar to earlier work of Bodlaender et al. \cite{bodlaender-tw-approx} for approximating undirected width parameters. They show that treewidth has a useful relation with balanced (undirected) vertex separators, and exploit this relation to obtain approximation algorithms for treewidth as well as pathwidth.\\
\indent We establish similar relationships between digraph width parameters and balanced directed vertex separators. To do so, we need some definitions.
\begin{definition}[$\alpha$-balanced directed vertex separator]\label{definition:directed-separator}
Let $G$ be a digraph and $U \subseteq V(G)$. Let $\alpha \in (0,1)$. An $\alpha$-balanced directed vertex separator of $U$ is a set $S \subseteq V(G)$ such that $V(G) - S$ can be partitioned into two sets $U_1$ and $U_2$, each of which has at most $\alpha \cdot |U|$ vertices of $U$, and such that $S$ guards $U_2$ in $G$.
\end{definition}

In the above definition, when $U=V(G)$, we refer to $S$ as an $\alpha$-balanced directed vertex separator of $G$.
\begin{definition}[$\alpha$-directed separator number]
For a digraph $G$, the $\alpha$-directed separator number of $G$, denoted by $dsn_\alpha(G)$, is the smallest $k$ such that every subset of $V(G)$ has an $\alpha$-balanced directed vertex separator of size no larger than $k$.
\end{definition}

The next two results show that for a digraph $G$ it is possible to obtain a lower bound for each of its digraph width parameters (discussed in \secref{sec:digraphwidthparameters}) in terms of its $\frac{3}{4}$-directed separator number.
\begin{prop}\label{prop:dtw separator property}
Let $G$ be a digraph whose directed treewidth is $k$. For any $U \subseteq V(G)$, there exists a $\frac{3}{4}$-balanced directed vertex separator of $U$ of size at most $k+1$.
\end{prop}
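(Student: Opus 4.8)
The plan is to adapt the classical centroid argument for undirected treewidth to arboreal decompositions. Fix an arboreal decomposition $\mathcal D=(T,\mathcal W,\mathcal X)$ of $G$ of width $k$, so $|W_i\cup X_{\sim i}|\le k+1$ for every node $i$ of $T$; we may assume $U\neq\emptyset$, since otherwise $S=\emptyset$ works. Because $\mathcal W$ partitions $V(G)$ (condition {\sf (DTW-1)}), each $u\in U$ lies in exactly one bag, so we can put $w(i):=|W_i\cap U|$ and $g(i):=|W_{\succeq i}\cap U|=\sum_{j\succeq i}w(j)$. Since $g(r)=|U|$ for the root $r$ of $T$, I would fix a node $i$ of \emph{maximum depth} with $g(i)\ge\frac12|U|$, and take $S:=W_i\cup X_{\sim i}$, so that $|S|\le k+1$ as required.

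First I would prove that every strongly connected component $C$ of $G\setminus S$ satisfies $|C\cap U|\le\frac12|U|$. The crux is that $C$ cannot straddle the ``boundary'' of a branch of $T$ below $i$. For a child $j_q$ of $i$ with arc $e_q=(i,j_q)$ we have $X_{e_q}\subseteq X_{\sim i}\subseteq S$, so $C$ is disjoint from $X_{e_q}$ and is therefore contained in a single strongly connected component of $G\setminus X_{e_q}$. By condition {\sf (DTW-2)}, $W_{\succ e_q}=W_{\succeq j_q}$ is $X_{e_q}$-normal, so by \remref{rem:normal} it is a union of consecutive strongly connected components of $G\setminus X_{e_q}$; hence either $C\subseteq W_{\succeq j_q}$ or $C\cap W_{\succeq j_q}=\emptyset$. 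If the first alternative holds for some child $j_q$, then $|C\cap U|\le g(j_q)<\frac12|U|$ because $j_q$ is deeper than $i$ and $i$ was chosen of maximum depth; if the second alternative holds for all children, then $C\cap W_{\succeq i}=\emptyset$ (since $W_{\succeq i}=W_i\cup\bigcup_q W_{\succeq j_q}$ and $W_i\subseteq S$), so $|C\cap U|\le|U|-g(i)\le\frac12|U|$.

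Next I would split $V(G)-S$ using the condensation of $G\setminus S$. List the strongly connected components of $G\setminus S$ as $C_1,\dots,C_N$ in topological order, so that every arc of $G\setminus S$ joining two distinct components runs from a lower-indexed one to a higher-indexed one. Let $t^\ast$ be the largest index with $\sum_{\ell\le t^\ast}|C_\ell\cap U|\le\frac34|U|$, and set $U_1:=\bigcup_{\ell\le t^\ast}C_\ell$ and $U_2:=\bigcup_{\ell>t^\ast}C_\ell$; these partition $V(G)-S$. By choice of $t^\ast$, $|U_1\cap U|\le\frac34|U|$; and if $t^\ast<N$ then $\sum_{\ell\le t^\ast}|C_\ell\cap U|>\frac34|U|-|C_{t^\ast+1}\cap U|\ge\frac14|U|$ by the component bound just proved, so $|U_2\cap U|\le|U|-\frac14|U|=\frac34|U|$ (and if $t^\ast=N$ then $U_2=\emptyset$). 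Finally, $U_2$ is a union of a suffix of the topological order and is disjoint from $S$, so no arc of $G$ can leave $U_2$ except into $S$; that is, $S$ guards $U_2$. Thus $S$ is a $\frac34$-balanced directed vertex separator of $U$ of size at most $k+1$.

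The step I expect to be the main obstacle is the component-confinement claim: translating condition {\sf (DTW-2)}, which only forbids a directed path that leaves $W_{\succeq j_q}$ and comes back, into the clean dichotomy $C\subseteq W_{\succeq j_q}$ or $C\cap W_{\succeq j_q}=\emptyset$. This is precisely where \remref{rem:normal} --- the reformulation of $X$-normality in terms of the strongly connected components of $G\setminus X$ --- does the work, combined with the observation that $C$ avoids every arc bag incident with the chosen node $i$. Once that is in hand, the centroid choice and the packing of the components into two balanced, guarded sides are routine.
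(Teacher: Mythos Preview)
Your proof is correct and follows essentially the same approach as the paper: choose a deepest node $i$ in an optimal arboreal decomposition with $|W_{\succeq i}\cap U|\ge\tfrac12|U|$, set $S=W_i\cup X_{\sim i}$, use {\sf (DTW-2)} via \remref{rem:normal} to bound each strongly connected component of $G\setminus S$ by $\tfrac12|U|$ vertices of $U$, and then pack the components in topological order into $U_1,U_2$. Your single-threshold packing is a bit cleaner than the paper's three-case split, and you spell out the component-confinement dichotomy that the paper leaves implicit, but the argument is the same.
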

\begin{proof}
Let $\mathcal{D}=(T,\mathcal{W},\mathcal{X})$ be an arboreal decomposition of $G$, whose width is $k$. Let $r$ be the root of $T$. Pick the unique node $q$ such that it satisfies the property that $W_{\succeq q}$ contains at least $\frac{1}{2}|U|$ vertices of $U$, and the distance between $q$ and $r$ is maximized. Consider the set $S:=W_q \cup X_{\sim q}$. Let $C_1, C_2, ..., C_l$ be the strongly connected components of $G-S$ sorted in topological order. It follows from the choice of $q$ and \remref{rem:normal} that each $C_i$ has at most $\frac{1}{2}|U|$ vertices of $U$. Note that $S$ is of size at most $k+1$. Now, it suffices to show that $S$ is a $\frac{3}{4}$-balanced directed vertex separator of $U$ in $G$. To do so, we group the vertex sets of $C_1, C_2, ..., C_l$ into two sets $U_1$ and $U_2$ (see \defref{definition:directed-separator}) as follows:\\
{\it Case $1:$} $V(G)-S$ contains at most $\frac{3}{4}|U|$ vertices of $U$.\\
Set $U_1:=V(G)-S$, and $U_2:=\emptyset$.\\
{\it Case $2:$} Some $C_i$ contains at least $\frac{1}{4}|U|$ vertices of $U$.\\
Let $|V(C_i) \cap U| = (\frac{1}{4}+\theta)|U|$, where $0 \le \theta \le \frac{1}{4}$. Set $A:= \bigcup_{j=1}^{i-1} C_j$ and $B:=\bigcup_{j=i+1}^{l} C_j$. It follows that $|A \cap U| + |B \cap U| \le (\frac{3}{4} - \theta)|U|$. If $|A \cap U| \le (\frac{3}{8}-\frac{\theta}{2})|U|$. Set $U_1:=A \cup V(C_i)$ and $U_2:=B$. Otherwise, $|B \cap U| \le (\frac{3}{8}-\frac{\theta}{2})|U|$ must hold true. Set $U_1:=A$ and $U_2:=V(C_i) \cup B$. It can be verified that both $U_1$ and $U_2$ contain at most $\frac{3}{4}|U|$ vertices of $U$.\\
{\it Case $3:$} Each $C_i$ contains strictly fewer than $\frac{1}{4}|U|$ vertices of $U$.\\
Let $1 \le j \le l$ be such that $\sum_{i=1}^{j-1} |C_i \cap U| < \frac{1}{4}|U|$ and $\sum_{i=1}^j |C_i \cap U| \ge \frac{1}{4}|U|$. Set $U_1:=\bigcup_{i=1}^j V(C_i)$ and $U_2:=\bigcup_{i=j+1}^l V(C_i)$. It can be easily verified that $U_1$ contains at most $\frac{1}{2}|U|$ vertices of $U$, and $U_2$ contains at most $\frac{3}{4}|U|$ vertices of $U$.
\end{proof}
\begin{cor}\label{cor:dtw separator property}
For a digraph $G$, (i) $dsn_\frac{3}{4}(G)-1 \le dtw(G)$, (ii) $dsn_\frac{3}{4}(G) -2 \le 3 \cdot dgw(G)$, (iii) $dsn_\frac{3}{4}(G) + 1 \le 6 \cdot kw(G)$, and, (iv) $dsn_\frac{3}{4}(G)-2 \le 3 \cdot dpw(G)$.
\end{cor}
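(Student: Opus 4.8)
The plan is to read all four inequalities off \propref{prop:dtw separator property}, using for the last three a comparison of the relevant width parameter with directed treewidth. Part (i) is immediate: writing $k=dtw(G)$, \propref{prop:dtw separator property} gives every $U\subseteq V(G)$ a $\frac{3}{4}$-balanced directed vertex separator of size at most $k+1$, so by the definition of the $\frac{3}{4}$-directed separator number $dsn_{\frac{3}{4}}(G)\le dtw(G)+1$, which is (i).

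For part (ii) I would invoke the comparison $dtw(G)\le 3\,dgw(G)+1$: every digraph of DAG-width $k$ has directed treewidth at most $3k+1$, which one gets by turning a DAG-decomposition of width $k$ into a witness of small directed treewidth — reading arc bags off the guarding sets $W_i\cap W_j$ supplied by {\sf (DGW-3)}, which, being guards, certify $X_e$-normality in the sense of \defref{definition:normal} — the factor $3$ being the price of replacing the DAG-shaped index set by an arborescence while keeping $\mathcal{W}$ a partition (equivalently, of bounding haven order). Combined with part (i): $dsn_{\frac{3}{4}}(G)\le dtw(G)+1\le 3\,dgw(G)+2$, which is (ii). Part (iv) is then free: a directed path decomposition is a DAG-decomposition, so $dgw(G)\le dpw(G)$ by \propref{prop:dpw dgw relation}, and substituting into (ii) gives $dsn_{\frac{3}{4}}(G)-2\le 3\,dgw(G)\le 3\,dpw(G)$.

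Part (iii) is the analogue for Kelly-decompositions. Condition {\sf (KW-2)} gives, for each node $i$, that $X_i$ guards $W_{\succeq i}$ — a guarding, hence $X_i$-normal, property — and the enumeration of children and roots granted by {\sf (KW-3)} is what lets one convert this into a witness of small directed treewidth (or directly into a balanced separator) with a bounded blow-up; the bound that comes out is $dsn_{\frac{3}{4}}(G)\le 6\,kw(G)-1$, i.e.\ $dsn_{\frac{3}{4}}(G)+1\le 6\,kw(G)$. Equivalently one may route this through (ii) via the comparison $dgw(G)\le 2\,kw(G)-1$.

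The routine parts are the definitional step in (i) and the inheritance in (iv); the real content is the two comparisons behind (ii) and (iii). The main obstacle in both is that the index set of a DAG-decomposition or Kelly-decomposition is a DAG rather than an arborescence (and, for Kelly-decompositions, must simultaneously carry a partition of $V(G)$): one therefore cannot descend to a single ``deepest balanced node'' as in \propref{prop:dtw separator property}, but must control how many ancestor and parent bags feed into the separator $S$ — which is exactly where the constants $3$ and $6$ enter, and where {\sf (DGW-2)}/{\sf (DGW-3)} (resp.\ {\sf (KW-2)}/{\sf (KW-3)}) have to be used carefully so that $S$ still has bounded size while \remref{rem:normal} certifies that every strong component of $G\setminus S$ contains at most $\frac{3}{4}|U|$ vertices of $U$.
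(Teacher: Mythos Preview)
Your proposal is correct and follows exactly the paper's structure: (i) directly from \propref{prop:dtw separator property}; (ii) and (iii) by combining (i) with the comparisons $dtw(G)\le 3\,dgw(G)+1$ and $dtw(G)\le 6\,kw(G)-2$, which the paper simply cites from \cite{dagw-journal} and \cite{kellywidth-definition} rather than re-deriving; and (iv) from (ii) via \propref{prop:dpw dgw relation}. One caution: the alternative route you offer for (iii), via a bound $dgw(G)\le 2\,kw(G)-1$, is not an established inequality---the precise constant-factor relationship between DAG-width and Kelly-width is open---so rely on the route through directed treewidth.
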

\begin{proof}
\propref{prop:dtw separator property} implies (i). It follows from \cite[Proposition 35]{dagw-journal} that $dtw(G) \le 3 \cdot dgw(G) +1$. This, along with (i) proves (ii). Now, \propref{prop:dpw dgw relation} leads to (iv). It follows from \cite[Corollary 21]{kellywidth-definition} that $dtw(G) \le 6 \cdot kw(G)-2$. This, along with (i) proves (iii).
\end{proof}
The bounds established in \cororef{cor:dtw separator property} are crucial in proving the approximation guarantees of our algorithms. This is due to the fact that our algorithms use the approximation algorithms for balanced directed vertex separators as a subroutine. Leighton and Rao \cite{leighton-rao} presented an $O(\log n)$ pseudo-approximation algorithm for computing balanced directed vertex separators. This was improved to $O(\sqrt{\log n})$ by Agarwal et al. \cite{agarwal-charikar-makarychev}.
\begin{theorem}\label{theorem:acmm-directed-vertex-separator}
There exists a polynomial time approximation algorithm that, given a digraph $G$, $U \subseteq V(G)$ and parameter $\alpha \in [\frac{1}{2},1)$, finds an $\alpha'$-balanced directed vertex separator of $U$ of size $O(\sqrt{\log n}\cdot dsn_\alpha(G))$ for any $\alpha'$ such that $\alpha' > \alpha$ and $\alpha' \ge \frac{2}{3}$.
\end{theorem}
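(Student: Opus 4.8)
The plan is to reduce the task of approximating a balanced directed \emph{vertex} separator of $U$ to that of approximating a balanced directed \emph{edge} separator, and then to invoke the $O(\sqrt{\log n})$-approximation algorithm of Agarwal et al.\ \cite{agarwal-charikar-makarychev} for directed edge separators as a black box, just as Leighton and Rao \cite{leighton-rao} did to obtain their $O(\log n)$ bound. Concretely, given $G$ and $U$, I would build the standard node-splitting digraph $G'$ on vertex set $\{v^{-},v^{+} : v \in V(G)\}$, with an ``internal'' arc $(v^{-},v^{+})$ of weight $1$ if $v \in U$ and weight $0$ otherwise, an ``external'' arc $(u^{+},v^{-})$ of weight $+\infty$ for each $(u,v) \in E(G)$, and a demand weight of $1$ on $v^{+}$ for each $v \in U$. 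In any finite-weight directed edge cut $(A,B)$ of $G'$ only internal arcs are cut, and setting $S := \{v : v^{-}\in A,\ v^{+}\in B\}$, $U_2 := \{v : v^{+}\in A\}$ and $U_1 := V(G)\setminus(S\cup U_2)$, the absence of a forward external cut arc is exactly the condition that $S$ guards $U_2$ in the sense of \defref{definition:directed-separator}; moreover the cut weight is $|S\cap U| \le |S|$ and the demand weights on the two sides are $|U_2\cap U|$ and $|(U_1\cup S)\cap U|$.

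For the approximation guarantee I would argue both directions of this correspondence. On the one hand, any $\alpha$-balanced directed vertex separator $S^{*}$ of $U$ with partition $(U_1^{*},U_2^{*})$ yields a directed edge cut of $G'$: put $v^{-},v^{+}$ on the source side for $v \in U_2^{*}$ and $v^{-}$ alone on the source side for $v \in S^{*}$; the ``guards'' property forbids forward external arcs, so this cut is finite with weight at most $|S^{*}| \le dsn_\alpha(G)$, and it can be arranged to be $\alpha$-balanced with respect to the demand weights. On the other hand, running the directed balanced-separator algorithm of Agarwal et al.\ on $G'$ (which has $2n$ vertices, so $\sqrt{\log|V(G')|}=O(\sqrt{\log n})$) produces in polynomial time an $\alpha'$-balanced directed edge cut of weight $O(\sqrt{\log n})$ times the optimum, for any $\alpha'$ with $\alpha' > \alpha$ and $\alpha' \ge \tfrac{2}{3}$ (their $O(\sqrt{\log n})$ directed-sparsest-cut bound, combined with the usual iterative conversion of sparsest cut to balanced separator). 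Projecting this cut back through the gadget gives a set $S$ with $|S| = O(\sqrt{\log n}\cdot dsn_\alpha(G))$ and a partition $(U_1,U_2)$ of $V(G)-S$ with $S$ guarding $U_2$ and $|U_1\cap U|,|U_2\cap U| \le \alpha'|U|$, as required.

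The step I expect to require the most care is the balance bookkeeping across the reduction: one must choose the demand weights on $G'$ so that an $\alpha$-balanced vertex separator always maps to an $\alpha$-balanced edge cut (so the black box has something of low weight to approximate) while an $\alpha'$-balanced edge cut always maps back to an $\alpha'$-balanced vertex separator, which requires tracking how the separator vertices $S$ — whose out-copies are all forced to the sink side — are distributed between the two sides of the cut, and reconciling this with the constant loss in balance that is inherent in the pseudo-approximation of Agarwal et al., so that the quantifier ``any $\alpha' > \alpha$ with $\alpha' \ge \tfrac{2}{3}$'' holds uniformly over $\alpha \in [\tfrac{1}{2},1)$. One should also note that the version of directed sparsest cut / balanced separator needed here is the \emph{weighted} one (demand concentrated on the vertices of $U$), which is handled by the same SDP relaxation and rounding analysis in \cite{agarwal-charikar-makarychev}.
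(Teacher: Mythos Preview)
The paper does not actually prove this theorem: it is stated as a known result attributed to Agarwal, Charikar, Makarychev and Makarychev \cite{agarwal-charikar-makarychev}, building on Leighton and Rao \cite{leighton-rao}, and is used only as a black box thereafter. Your proposal therefore goes beyond what the paper provides, by sketching the standard node-splitting reduction from directed vertex separators to directed edge separators and then invoking the $O(\sqrt{\log n})$ directed-sparsest-cut bound of \cite{agarwal-charikar-makarychev}. That is indeed the natural route, and the balance bookkeeping you flag as delicate is exactly where the work lies.

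There is, however, one concrete bug in your reduction. You assign weight $1$ to the internal arc $(v^{-},v^{+})$ only when $v\in U$ and weight $0$ otherwise. With these weights the cut value equals $|S\cap U|$, not $|S|$; so the approximation algorithm, when it returns an $\alpha'$-balanced cut of value $O(\sqrt{\log n}\cdot dsn_\alpha(G))$, only certifies $|S\cap U|=O(\sqrt{\log n}\cdot dsn_\alpha(G))$. Nothing then prevents the algorithm from cutting arbitrarily many weight-$0$ internal arcs for vertices outside $U$, which forces those vertices into $S$ (you cannot reassign them to $U_1$ or $U_2$ without creating a forward infinite-weight arc, as you yourself need for the guarding argument), and hence $|S|$ is unbounded in terms of the cut value. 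The fix is simply to give \emph{every} internal arc weight $1$: then the cut value is exactly $|S|$, the optimal $\alpha$-balanced vertex separator still yields a feasible edge cut of value at most $dsn_\alpha(G)$, and the rest of your argument goes through.
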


Plugging the value of $\alpha$ as $\frac{3}{4}$ in \thref{theorem:acmm-directed-vertex-separator}, and using inequality (i) from \cororef{cor:dtw separator property}, we get the following result:
\begin{cor}\label{cor:separator algorithms}
There exists a constant $\beta$ and a polynomial time approximation algorithm, call it {\sf FindSep}, that given a digraph $G$ and $U \subseteq V(G)$, finds an $\alpha'$-balanced directed vertex separator of $U$ of size at most $\beta\sqrt{\log{n}} \cdot dtw(G)$, for any $\alpha' > \frac{3}{4}$.
\end{cor}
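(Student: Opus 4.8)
The plan is simply to instantiate \thref{theorem:acmm-directed-vertex-separator} with the parameter $\alpha=\frac{3}{4}$ and then translate the resulting size bound, which is phrased in terms of $dsn_{3/4}(G)$, into a bound in terms of $dtw(G)$ via part (i) of \cororef{cor:dtw separator property}.

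First I would verify that the hypotheses of \thref{theorem:acmm-directed-vertex-separator} are satisfied for this choice. The value $\alpha=\frac{3}{4}$ lies in $[\frac{1}{2},1)$, as required. Moreover, for any target balance $\alpha'>\frac{3}{4}$ we automatically have $\alpha'>\alpha$ and, since $\frac{3}{4}>\frac{2}{3}$, also $\alpha'\ge\frac{2}{3}$; thus the combined constraint ``$\alpha'>\alpha$ and $\alpha'\ge\frac{2}{3}$'' of \thref{theorem:acmm-directed-vertex-separator} collapses to exactly the hypothesis ``$\alpha'>\frac{3}{4}$'' of the corollary. Hence the algorithm of \thref{theorem:acmm-directed-vertex-separator}, which we name {\sf FindSep}, runs in polynomial time on input $(G,U)$ and returns an $\alpha'$-balanced directed vertex separator of $U$ of size $O\big(\sqrt{\log n}\cdot dsn_{3/4}(G)\big)$.

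It then remains to replace $dsn_{3/4}(G)$ by a function of $dtw(G)$. By \cororef{cor:dtw separator property}(i) we have $dsn_{3/4}(G)\le dtw(G)+1$; assuming $dtw(G)\ge 1$ (the degenerate case $dtw(G)=0$ is handled separately) this gives $dsn_{3/4}(G)\le 2\,dtw(G)$, so the separator produced by {\sf FindSep} has size $O\big(\sqrt{\log n}\cdot dtw(G)\big)$. Absorbing the hidden constant of \thref{theorem:acmm-directed-vertex-separator} together with the factor $2$ into a single constant $\beta$ yields the claimed bound $\beta\sqrt{\log n}\cdot dtw(G)$. Since every step is a direct substitution, there is no genuine obstacle; the only points deserving a moment's care are the reduction of the balance constraint to ``$\alpha'>\frac{3}{4}$'' and the absorption of the additive ``$+1$'' of \cororef{cor:dtw separator property}(i) into the multiplicative bound.
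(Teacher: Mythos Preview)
Your proposal is correct and matches the paper's own argument exactly: the paper simply says to plug $\alpha=\frac{3}{4}$ into \thref{theorem:acmm-directed-vertex-separator} and then apply inequality (i) of \cororef{cor:dtw separator property}. Your additional care in verifying the balance constraint and absorbing the additive $+1$ into $\beta$ is more explicit than the paper, but the approach is identical.
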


We write $(S;U_1,U_2):={\rm{\sf FindSep}}(G,U,\alpha')$ to denote that $S$ is the computed $\alpha'$-balanced directed vertex separator of $U$ in $G$, and $U_1$ and $U_2$ are the two parts of $V(G)-S$ (see \defref{definition:directed-separator}). If $U_1$ and $U_2$ are irrelevant, we simply write $S:={\rm{\sf FindSep}}(G,U,\alpha')$.
\begin{rem}\label{rem:separator algorithms}
Note that, in \cororef{cor:separator algorithms}, one can replace directed treewidth by any other digraph width parameter mentioned in \secref{sec:digraphwidthparameters}. The proof follows by using the appropriate inequality from \cororef{cor:dtw separator property}.
\end{rem}

\subsection{Approximating directed pathwidth, DAG-width and Kelly-width}\label{sec:dpw approximation}

In this section, we first present an $O(\log^{3/2}{n})$-approximation algorithm for directed pathwidth. Next, we explain how essentially the same algorithm works as an $O(\log^{3/2}{n})$-approximation algorithm for both DAG-width and Kelly-width.\\
\indent Our algorithm uses a divide-and-conquer approach whose high level idea is based on the following observation. Let $G$ be a digraph and let $S$ be a directed vertex separator of $G$. Let $U_1$ and $U_2$ be the two parts of $V(G)-S$ (see \defref{definition:directed-separator}). Let $\mathcal{D}_1= (T_1, \mathcal{W}_1)$ and $\mathcal{D}_2= (T_1, \mathcal{W}_1)$ be directed path decomposition of $G[U_1]$ and $G[U_2]$ respectively. We now describe how one can obtain a directed path decomposition of $G$ by merging the decompositions $\mathcal{D}_1$ and $\mathcal{D}_2$.
\begin{sub}[Merge]\label{sub:merge}
We write $\mathcal{D} := {\rm{\sf Merge}}(\mathcal{D}_1,\mathcal{D}_2;S)$ when the decomposition $\mathcal{D}=(T,\mathcal{W})$ is constructed as follows: $T$ is obtained by taking the union of $T_1$ and $T_2$ and adding an edge from the unique sink of $T_1$ to the unique root of $T_2$. For each node $i$ of $T$, the node bag $W_i$ is defined as the union of $S$ and the node bag at $i$ with respect to decomposition $\mathcal{D}_1$ or $\mathcal{D}_2$ (as applicable).
\end{sub}
\begin{clm}\label{clm:merge}
$\mathcal{D}={\rm{\sf Merge}}(\mathcal{D}_1,\mathcal{D}_2;S)$ is a directed path decomposition of $G$.
\end{clm}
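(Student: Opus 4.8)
The plan is to verify directly that $\mathcal{D}=(T,\mathcal{W})$ satisfies the three conditions defining a directed path decomposition, namely {\sf (DGW-1)}, {\sf (DGW-2)} and {\sf (DPW)}; by \propref{prop:dgw3-equivalentto-dpw} the last of these is equivalent to {\sf (DGW-3)} once the first two hold, and $T$ is clearly again a directed path graph (the path underlying $T_1$, followed by the new arc into the path underlying $T_2$), so this suffices. Throughout I will write $B_i$ for the node bag at $i$ with respect to $\mathcal{D}_1$ (resp.\ $\mathcal{D}_2$), so that $W_i = S\cup B_i$, and I will use freely that $V(G)$ is the disjoint union $S\cup U_1\cup U_2$, that $B_i\subseteq U_1$ for $i\in V(T_1)$ while $B_i\subseteq U_2$ for $i\in V(T_2)$, and --- crucially --- that since $S$ guards $U_2$ (see \defref{definition:directed-separator}), no arc of $G$ has tail in $U_2$ and head in $U_1$.

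For {\sf (DGW-1)}: since $\mathcal{D}_1$ and $\mathcal{D}_2$ are directed path decompositions of $G[U_1]$ and $G[U_2]$, the bags $B_i$ cover $U_1\cup U_2$, and every $W_i$ contains $S$, so $\bigcup_i W_i = S\cup U_1\cup U_2 = V(G)$. For {\sf (DGW-2)}: fix $i\preceq j\preceq k$ in $V(T)$. If $i,j,k$ all lie in $T_1$ (or all in $T_2$), then $W_i\cap W_k = S\cup(B_i\cap B_k)\subseteq S\cup B_j = W_j$, using {\sf (DGW-2)} for $\mathcal{D}_1$ (resp.\ $\mathcal{D}_2$) together with the disjointness of $S$ from $U_1$ (resp.\ $U_2$); the cases $i=j$ or $j=k$ are trivial. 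Otherwise the ordering in $T$ forces $i\in V(T_1)$ and $k\in V(T_2)$, and then $B_i\subseteq U_1$ and $B_k\subseteq U_2$ are disjoint from each other and from $S$, so $W_i\cap W_k = S\subseteq W_j$.

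The real content is {\sf (DPW)}. Take an arc $(u,v)\in E(G)$. If $u\in S$, then $u$ lies in every bag, so taking $i$ to be the root of $T$ and $j$ any node whose bag contains $v$ (one exists by {\sf (DGW-1)}) works, since the root precedes $j$. If $v\in S$, then $v$ lies in every bag, in particular in any bag containing $u$, so $i=j$ works. In the remaining cases $u\notin S$ and $v\notin S$; since there is no arc from $U_2$ to $U_1$, either both $u,v\in U_1$, or both $u,v\in U_2$, or $u\in U_1$ and $v\in U_2$. In the first case $(u,v)$ is an arc of $G[U_1]$, so {\sf (DPW)} for $\mathcal{D}_1$ gives nodes $i\preceq_{T_1} j$ of $T_1$ with $u\in B_i$ and $v\in B_j$, and the restriction of $\preceq_T$ to $V(T_1)$ coincides with $\preceq_{T_1}$; the second case is symmetric using $\mathcal{D}_2$. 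In the last case, pick any $i\in V(T_1)$ with $u\in B_i$ and any $j\in V(T_2)$ with $v\in B_j$; every node of $T_1$ precedes every node of $T_2$ in $T$, so $i\preceq j$, and $u\in W_i$, $v\in W_j$. This exhausts all cases, so {\sf (DPW)} holds and $\mathcal{D}$ is a directed path decomposition of $G$.

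I would flag that the only place the hypotheses genuinely bite is this last verification: the guarding property of $S$ is exactly what rules out backward ($U_2\to U_1$) arcs, and the orientation of the connecting arc --- from the sink of $T_1$ to the root of $T_2$ --- is precisely what makes the forward ($U_1\to U_2$) arcs satisfiable. Everything else is bookkeeping with the disjoint partition $S\cup U_1\cup U_2$ and the observation that $\preceq_T$ restricts to $\preceq_{T_1}$ and $\preceq_{T_2}$ on the two halves.
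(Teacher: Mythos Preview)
Your proof is correct and follows essentially the same approach as the paper: verify {\sf (DGW-1)}, {\sf (DGW-2)}, and {\sf (DPW)} directly, with the key point being that the guarding property of $S$ rules out arcs from $U_2$ to $U_1$ while the orientation of the connecting arc handles arcs from $U_1$ to $U_2$. The paper's proof is terser---it waves at {\sf (DGW-1)} and {\sf (DGW-2)} as immediate from the construction and organizes the {\sf (DPW)} case analysis by the location of $u$ rather than by whether $u$ or $v$ lies in $S$---but the substance is identical.
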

\begin{proof}
Note that $U_1$ and $U_2$ are disjoint, and $V(G) = U_1 \cup U_2 \cup S$. It follows from the construction of $\mathcal{D}$ that it satisfies the conditions {\sf (DGW-1)} and {\sf (DGW-2)}.\\
\indent Let $(u,v)$ be any arc of $G$. It suffices to show that the condition {\sf (DPW)} is satisfied for $(u,v)$. This is easy to see if both $u, v$ lie in either $U_1$ or $U_2$ or $S$. Note that the vertices in $S$ appear in every node bag of $\mathcal{D}$. Since $S$ guards $U_2$, it follows that if $u \in U_2$, then $v$ must lie in $S \cup U_2$. Thus, {\sf (DPW)} is satisfied for these arcs as well. By the same reasoning, we conclude that {\sf (DPW)} is satisfied if $u \in U_1$ and $v \in S$. Now, $T$ is constructed by adding an arc from the sink of $T_1$ to the root of $T_2$. Hence, {\sf (DPW)} is satisfied if $u \in U_1$ and $v \in U_2$.
\end{proof}

The {\sf Merge} subroutine suggests a natural divide-and-conquer algorithm. We now formally describe the recursion subroutine {\sf MakeDPDec}, which receives a single argument $G[U]$ where $U \subseteq V(G)$ and returns a directed path decomposition of $G[U]$.\\

\noindent {\sf MakeDPDec}$(G[U])$
\begin{enumerate}
\item {\it Termination Step:}\footnote{Refer to \cororef{cor:separator algorithms}} If $|U| \le \beta\log^{3/2} n$, return the trivial decomposition of $G[U]$.
\item {\it Divide Step:} Let $(S;U_1,U_2):={\rm{\sf FindSep}}(G[U],U,\alpha')$.\\Recursively compute $\mathcal{D}_1:={\rm{\sf MakeDPDec}}(G[U_1])$ and $\mathcal{D}_2:={\rm{\sf MakeDPDec}}(G[U_2])$.
\item {\it Combine Step:} Let $\mathcal{D}:={\sf{\rm Merge}}(\mathcal{D}_1,\mathcal{D}_2;S)$. Return $\mathcal{D}$.
\end{enumerate}

Given a digraph $G$, our algorithm is just {\sf MakeDPDec}$(G)$. We fix $\alpha' \in (\frac{3}{4},1)$ throughout the algorithm.
\begin{lem}
Given a digraph $G$, {\sf MakeDPDec}$(G)$ returns a directed path decomposition of $G$, whose width is $O(\log^{3/2}n\cdot dpw(G))$.
\end{lem}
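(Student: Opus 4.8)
The plan is to prove two things: (i) correctness, that \textsf{MakeDPDec}$(G)$ returns a valid directed path decomposition, and (ii) the width bound. Correctness follows by induction on $|U|$ using \clref{clm:merge}: the termination step returns a trivial decomposition, which is a valid directed path decomposition by \remref{rem:trivial}; the inductive step builds $\mathcal{D}_1, \mathcal{D}_2$ on the induced subdigraphs $G[U_1], G[U_2]$, and \clref{clm:merge} guarantees $\textsf{Merge}(\mathcal{D}_1,\mathcal{D}_2;S)$ is a directed path decomposition of $G[U]$. One subtlety worth flagging: \clref{clm:merge} is stated for the full digraph $G$ with a separator of $G$, but here we apply it to $G[U]$ with a separator of $U$ returned by \textsf{FindSep}$(G[U],U,\alpha')$; the argument goes through verbatim with $G$ replaced by $G[U]$ since $S$ guards $U_2$ within $G[U]$ and $U_1 \cup U_2 \cup S = U = V(G[U])$.

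For the width bound, the key observation is that the node bags of $\mathcal{D} = \textsf{MakeDPDec}(G)$ accumulate separators along the recursion. First I would set up the recursion tree $\mathcal{R}$ of the calls to \textsf{MakeDPDec}. Each internal node of $\mathcal{R}$ corresponds to a call on some $G[U]$ with $|U| > \beta \log^{3/2} n$, at which we compute a separator $S_U := \textsf{FindSep}(G[U],U,\alpha')$ of size at most $\beta \sqrt{\log n} \cdot dtw(G[U]) \le \beta \sqrt{\log n} \cdot dpw(G)$ by \cororef{cor:separator algorithms} (using $dtw \le dpw$, or directly via \remref{rem:separator algorithms} with $dpw$). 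Here I use monotonicity: $dtw(G[U]) \le dtw(G) \le dpw(G)$, which holds because an arboreal (resp.\ directed path) decomposition restricts to induced subdigraphs. A node bag of the final $\mathcal{D}$ that originates at a leaf call on $G[U_{\mathrm{leaf}}]$ equals $U_{\mathrm{leaf}}$ together with the union of all separators $S_{U'}$ over ancestors $U'$ of $U_{\mathrm{leaf}}$ in $\mathcal{R}$, since the \textsf{Merge} subroutine unions $S_{U'}$ into every descendant bag.

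Thus the size of any node bag of $\mathcal{D}$ is at most $|U_{\mathrm{leaf}}| + (\text{depth of } \mathcal{R}) \cdot \beta \sqrt{\log n}\cdot dpw(G)$, and $|U_{\mathrm{leaf}}| \le \beta \log^{3/2} n$ by the termination condition. It remains to bound the depth of $\mathcal{R}$: since \textsf{FindSep} returns an $\alpha'$-balanced separator with $\alpha' < 1$ a fixed constant, each of $U_1, U_2$ has $|U_i| \le \alpha' |U|$, so the depth is $O(\log_{1/\alpha'} n) = O(\log n)$. Combining, every bag has size $O(\log^{3/2} n) + O(\log n)\cdot O(\sqrt{\log n})\cdot dpw(G) = O(\log^{3/2} n \cdot dpw(G))$ (using $dpw(G) \ge 1$ for nonempty $G$, or handling the trivial $dpw = 0$ case — a DAG — separately). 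I expect the main obstacle to be bookkeeping rather than any deep difficulty: precisely, one must verify that \textsf{Merge} indeed propagates $S$ into all descendant bags (immediate from the subroutine definition, since \textsf{Merge} at the combine step unions $S$ into \emph{every} node bag of the merged tree, and this is applied recursively bottom-up), and one must be slightly careful that $dpw(G[U]) \le dpw(G)$ so that the bound $dsn_{3/4}(G[U])$ used inside \textsf{FindSep} is controlled by $dpw(G)$ rather than only $dpw(G[U])$ — but in fact $\textsf{FindSep}(G[U],\cdot,\cdot)$ already gives a bound in terms of $dtw(G[U]) \le dpw(G[U]) \le dpw(G)$, so there is nothing extra to check beyond this monotonicity remark.
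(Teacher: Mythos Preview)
Your proposal is correct and follows essentially the same approach as the paper: induction via \clref{clm:merge} for correctness, and bounding bag sizes by (recursion depth) $\times$ (separator size) $+$ (leaf bag size), with depth $O(\log n)$ from the balance condition and separator size $O(\sqrt{\log n}\cdot dpw(G))$ from \cororef{cor:separator algorithms}/\remref{rem:separator algorithms}. One small caveat: the inequality $dtw(G) \le dpw(G)$ you invoke is not stated in the paper (only $dtw \le 3\,dgw + 1 \le 3\,dpw + 1$), so you should rely on your alternate route via \remref{rem:separator algorithms} applied directly with $dpw$, which is exactly what the paper does.
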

\begin{proof}
It follows from \remref{rem:trivial} that at the termination step, the algorithm returns a directed path decomposition of the input graph. Let $\mathcal{D}$ be the decomposition returned by {\sf MakeDPDec}$(G)$. By repeated application of \clref{clm:merge}, it follows that $\mathcal{D}$ is a directed path decomposition of $G$. It remains to show that the size of each node bag is $O(\log^{3/2}n\cdot dpw(G))$.\\
\indent Note that each invocation of {\sf MakeDPDec} finds an $\alpha'$-balanced directed vertex separator. This guarantees that the depth of the recursion tree is $O(\log n)$. Thus, each node bag of $\mathcal{D}$ comprises of $c\log n$ separators (where $c$ is a constant whose value depends only on $\alpha'$). Using \cororef{cor:separator algorithms} and \remref{rem:separator algorithms}, each separator is of size at most $\beta\sqrt{\log n}\cdot dpw(G)$. The termination step ensures that the size of the node bag in the trivial decomposition is at most $\beta\log^{3/2} n$. It follows that the size of each node bag is at most $c\log n \cdot \beta\sqrt{\log{n}} \cdot dpw(G) + \beta\log^{3/2}n = O(\log^{3/2} n \cdot dpw(G))$.
\end{proof}
\begin{theorem}\label{thm:dpd-algo}
There exists a polynomial time approximation algorithm that, given a digraph $G$, computes a directed path decomposition of $G$, whose width is $O(\log^{3/2}n \cdot dpw(G))$.
\end{theorem}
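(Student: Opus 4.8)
The plan is to observe that almost all of the work has already been done by the lemma immediately preceding this theorem, which establishes that ${\sf MakeDPDec}(G)$ returns a directed path decomposition of $G$ whose width is $O(\log^{3/2} n \cdot dpw(G))$. Thus the only thing left to verify is that ${\sf MakeDPDec}$ runs in polynomial time, and the theorem then follows by exhibiting exactly this algorithm.

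To bound the running time I would first bound the depth of the recursion tree. Each non-terminating call ${\sf MakeDPDec}(G[U])$ invokes ${\sf FindSep}(G[U],U,\alpha')$, which by definition returns an $\alpha'$-balanced directed vertex separator of $U$; since $V(G[U])=U$, the two resulting parts $U_1,U_2$ satisfy $|U_i| \le \alpha'|U|$. Because $\alpha'$ is a fixed constant with $\alpha'<1$, the size of the argument shrinks by a constant factor at each level, so the recursion terminates after $O(\log n)$ levels — precisely the fact already used in the width analysis. Next I would bound the number of recursive calls: the subproblems occurring at any fixed recursion depth are supported on vertex sets that are pairwise disjoint subsets of $V(G)$, since ${\sf MakeDPDec}(G[U])$ recurses on $G[U_1]$ and $G[U_2]$ with $U_1 \cap U_2 = \emptyset$. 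Hence there are at most $n$ subproblems per level and $O(n\log n)$ calls in total.

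Finally, I would note that each individual call performs only a polynomial amount of work outside of its recursive calls: the invocation of ${\sf FindSep}$ runs in polynomial time by \thref{theorem:acmm-directed-vertex-separator} (via \cororef{cor:separator algorithms}), and the ${\sf Merge}$ step — forming the disjoint union of two directed path graphs, adding a single arc from the sink of $T_1$ to the root of $T_2$, and adjoining $S$ to each node bag — is plainly polynomial. Multiplying the per-call cost by the $O(n\log n)$ bound on the number of calls yields an overall polynomial running time, which together with the preceding lemma completes the proof.

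I do not expect any genuine obstacle here: the only care needed is the routine bookkeeping that the recursion bottoms out in $O(\log n)$ levels and that the number of subproblems stays polynomial, both of which are immediate from the $\alpha'$-balancedness of the computed separators and the disjointness of the recursive arguments. The substantive content — correctness of ${\sf Merge}$ (\clref{clm:merge}) and the $O(\log^{3/2} n)$ bound on the width — has already been dispatched in the lemma.
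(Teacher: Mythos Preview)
Your proposal is correct and matches the paper's approach: the paper states the theorem immediately after the lemma with no separate proof, treating it as an evident consequence, and you have simply spelled out the routine polynomial-time verification (bounded recursion depth, disjoint subproblems, polynomial-time {\sf FindSep} and {\sf Merge}) that the paper leaves implicit.
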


Note that a directed path decomposition is also a DAG-decomposition. It follows from \remref{rem:separator algorithms} that {\sf MakeDPDec}$(G)$ returns a DAG-decomposition of $G$, whose width is $O(\log^{3/2}n \cdot dgw(G))$. This gives us a theorem analogous to \thref{thm:dpd-algo} for DAG-width.

Given a directed path decomposition of a digraph $G$, one can construct a Kelly path decomposition of $G$ of the same width using the construction described in the proof of \thref{thm:kpw-equals-dpw}. Note that a Kelly path decomposition is also a Kelly-decomposition. Using \remref{rem:separator algorithms}, we conclude that the output of {\sf MakeDPDec}$(G)$ can be transformed into a Kelly-decomposition of $G$ (in polynomial time), whose width is $O(\log^{3/2}n \cdot kw(G))$. This gives us a theorem analogous to \thref{thm:dpd-algo} for Kelly-width.

\subsection{Approximating directed treewidth}\label{sec:approx-dtw}

\indent In this section, we present an $O(\sqrt{\log n})$-approximation algorithm for directed treewidth. Our algorithm uses ideas from the proof of Theorem 3.3 of Johnson et al. \cite{dtw-definition} which basically states that a digraph with a {\it haven} of large order has a large directed treewidth. Havens correspond to certain winning strategies in a variant of the cops-and-robber game introduced in \cite{tw-visible-dynamic}.\\
\indent Let $G$ be a digraph, and $W, Y \subseteq V(G)$ such that $W$ is $Y$-normal. Let $S$ be a directed vertex separator of $Y$ in $G$ such that $S \cap W \ne \emptyset$. Let $C_1, ..., C_q$ be the strongly connected components of $G \setminus S$. For each $C_i$, consider the strongly connected components of $C_i \setminus Y$. It follows from \remref{rem:normal} that the vertex set of each such strongly connected component is either entirely contained in $W$, or otherwise disjoint from $W$.

Now, let $G_1, ..., G_p$ be all the digraphs such that each $G_j$ is a strongly connected component of $C_i \setminus Y$ for some $1 \le i \le q$, and $V(G_j) \subseteq W$. We say that $W_1:=V(G_1), ..., W_p:=V(G_p)$ is the {\it refinement of $W$ with respect to $Y$ and $S$}. For such a refinement, we define an associated many-to-one function $parent$ from $\{G_1, ..., G_p\}$ to $\{C_1, ..., C_q\}$ as follows: $parent(G_j) := C_i$ whenever $G_j$ is a strongly connected component of $C_i \setminus Y$. Let $\mathcal{D}_1, \mathcal{D}_2, ..., \mathcal{D}_p$ be such that for each $1\le i \le p$, $\mathcal{D}_i$ is an arboreal decomposition of $G$ with respect to $W_i$. We now describe how one can obtain an arboreal decomposition of $G$ with respect to $W$ by gluing the decompositions $\mathcal{D}_1, \mathcal{D}_2, ..., \mathcal{D}_p$.
\begin{sub}[Glue]\label{sub:glue}\footnote{This construction is based on the proof of Theorem 3.3 of Johnson et al. \cite{dtw-definition}.}
We write $\mathcal{D}:={\rm{\sf Glue}}(\mathcal{D}_1, ..., \mathcal{D}_p ; W,Y,S)$ when the decomposition $\mathcal{D}=(T,\mathcal{W},\mathcal{X})$ is constructed as follows: $T$ is obtained by taking the union of $T_1, ..., T_p$, and adding a new (root) node $r_0$ and an arc $e_i$ from $r_0$ to the unique root of $T_i$ for each $1 \le i \le p$. The arc bag $X_{e_i}$ is set to $Y_i:=S \cup (Y \cap V(parent(G_i)))$. The node bag $W_{r_0}$ is set to $S \cap W$. For every other node (arc), the node bag (arc bag) is unchanged.
\end{sub}
\begin{clm}\label{clm:glue}
$\mathcal{D}:={\rm{\sf Glue}}(\mathcal{D}_1, ..., \mathcal{D}_p ; W,Y,S)$ is an arboreal decomposition of $G$ with respect to $W$.
\end{clm}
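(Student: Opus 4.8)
The plan is to check directly the three requirements that make $\mathcal{D}=(T,\mathcal{W},\mathcal{X})$ an arboreal decomposition of $G$ with respect to $W$ (see \defref{definition:arboreal-decomposition} and the paragraph following it): that $T$ is an arborescence, that $\mathcal{W}$ is a partition of $W$, and that $W_{\succ e}$ is $X_e$-normal for every arc $e$ of $T$. The arborescence check is immediate: each $T_i$ is an arborescence with unique root $\rho_i$, the $T_i$ are pairwise vertex-disjoint, and $\mathcal{D}$ adjoins a fresh node $r_0$ with a single arc $e_i=(r_0,\rho_i)$ into each $\rho_i$; hence $r_0$ is the unique source of $T$, and the unique walk from $r_0$ to a node $v$ is obtained by first traversing $e_i$ into the component $T_i$ that contains $v$ and then following the unique $\rho_i$-to-$v$ walk of $T_i$.

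For the partition condition I would show that $S\cap W$ together with $W_1=V(G_1),\dots,W_p=V(G_p)$ partitions $W$; combined with the fact that each $\mathcal{D}_i$ already partitions $W_i$, this gives that $\mathcal{W}$ partitions $W$. Disjointness: each $W_i\subseteq V(C_i)\subseteq V(G)\setminus S$, so $W_i$ is disjoint from $S\cap W$, and two distinct $G_i,G_{i'}$ are vertex-disjoint, either because $parent(G_i)\ne parent(G_{i'})$ are distinct strongly connected components of $G\setminus S$, or because they are distinct strongly connected components of a common $C_i\setminus Y$. Coverage: a vertex $v\in W$ with $v\in S$ lies in $W_{r_0}=S\cap W$; if $v\notin S$ then $v$ lies in some strongly connected component $C_i$ of $G\setminus S$, and since $W$ is $Y$-normal we have $v\notin Y$, so $v$ lies in a strongly connected component $H$ of $C_i\setminus Y$; as $H$ meets $W$, the dichotomy recorded just before \subroutineref{sub:glue} forces $V(H)\subseteq W$, so $H$ is one of $G_1,\dots,G_p$ and $v$ lies in the corresponding $W_j$.

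For {\sf (DTW-2)} there are two cases. If $e$ lies inside some $T_i$, then in $T$ no node outside $T_i$ is reachable from $e$ (and $r_0$ is a source), so both $W_{\succ e}$ and $X_e$ agree with their values in $\mathcal{D}_i$, and since $\mathcal{D}_i$ is an arboreal decomposition $W_{\succ e}$ is $X_e$-normal. If $e=e_i=(r_0,\rho_i)$, then $W_{\succ e_i}=\bigcup_{j\in V(T_i)}W_j=W_i=V(G_i)$ (using that $\mathcal{D}_i$ partitions $W_i$) and $X_{e_i}=Y_i=S\cup\big(Y\cap V(C_i)\big)$ with $C_i=parent(G_i)$, so the task reduces to proving that $V(G_i)$ is $Y_i$-normal. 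That $V(G_i)\cap Y_i=\emptyset$ is clear from $V(G_i)\subseteq V(C_i)\setminus Y$ and $V(G_i)\cap S=\emptyset$. For the remaining part of \defref{definition:normal}, I would take a directed path $P$ in $G\setminus Y_i$ with first and last vertices in $V(G_i)$: since $P$ avoids $S$ it is a walk in $G\setminus S$ that begins and ends in $C_i$, and because a closed walk in the condensation DAG of $G\setminus S$ is constant, every vertex of $P$ lies in $V(C_i)$; then $P$ also avoids $Y\cap V(C_i)$ and is contained in $V(C_i)$, hence avoids $Y$, so $P$ is a walk in $C_i\setminus Y$ that begins and ends in $V(G_i)$; applying the same condensation argument inside $C_i\setminus Y$ shows every vertex of $P$ lies in $V(G_i)$, so $P$ uses no vertex of $G\setminus(V(G_i)\cup Y_i)$. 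This establishes $X_e$-normality for the new arcs and completes the verification.

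The only step with real content is this last one, confirming that $V(G_i)$ is $Y_i$-normal. The crux will be the two-level ``a walk cannot leave a strongly connected component and later return to it'' argument — applied first to the components of $G\setminus S$ and then to the components of $C_i\setminus Y$ — together with the observation that placing exactly $S$ and $Y\cap V(C_i)$ into the arc bag $X_{e_i}$ is what traps any such path inside $V(G_i)$. Everything else is routine bookkeeping against \defref{definition:arboreal-decomposition} and \remref{rem:normal}.
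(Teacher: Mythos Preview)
Your proposal is correct and follows essentially the same approach as the paper. The only cosmetic difference is that where you verify $Y_i$-normality of $V(G_i)$ by a direct path argument (the ``two-level condensation'' step), the paper instead observes that $G_j$ is a strongly connected component of $G\setminus Y_j$ and then invokes \remref{rem:normal}; these are two phrasings of the same fact, and your version is arguably more self-contained since it also makes explicit the checks (arborescence structure, {\sf (DTW-2)} for internal arcs of the $T_i$) that the paper leaves implicit.
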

\begin{proof}
First we show that the node bag of $\mathcal{D}$ form a partition of $W$. Since $\mathcal{D}_j$ is an arboreal decomposition of $G$ with respect to $W_j$ for each $1 \le j \le p$, it suffices to show that $\{W_{r_0}, W_1, W_2, ..., W_p\}$ is a partition of $W$, and that $W_j$ is $Y_j$-normal for each $1 \le j \le p$.\\
\indent Indeed, note that each $W_j$ is the vertex set of a strongly connected component $G_j$ of $G \setminus (S \cup Y)$. It follows that $W_j$ is a subset of $V(H)$, where $H$ is some strongly connected component of $G \setminus Y$. Since $W$ is $Y$-normal, it follows from \remref{rem:normal} that either $V(H) \subseteq W$, or otherwise $V(H) \cap W = \emptyset$. However, by choice of $G_1, ..., G_p$, we know that $V(G_j) \subseteq W$. Note that all other strongly connected components of $G \setminus (S \cup Y)$ are disjoint from $W$. This implies that $W_1 \cup ... \cup W_p = (W-S)-Y = W-(S \cap W)$, where the final equality holds since $W$ and $Y$ are disjoint. Note that $W_{r_0}$ is defined as $S \cap W$. It follows that the union of $W_{r_0}, W_1, ..., W_p$ is $W$. By definition, $W_i$'s are all non-empty and pairwise disjoint, and they are all disjoint from $W_{r_0}$. Also, $W_{r_0}$ is non-empty by definition of $S$.\\
\indent Let $C_i:=parent(G_j)$. Note that $G_j$ is a strongly connected component of $G \setminus (S \cup (Y \cap V(C_i)))$. It follows from the definitions of $W_j$ and $Y_j$, and \remref{rem:normal} that $W_j$ is $Y_j$-normal for each $1 \le j \le p$. This completes the proof.
\end{proof}

The {\sf Glue} subroutine suggests a natural divide-and-conquer algorithm. We now formally describe the recursion subroutine {\sf MakeArbDec}, which receives three arguments: the digraph $G$, and $W, Y \subseteq V(G)$ such that $W$ is $Y$-normal, and returns an arboreal decomposition of $G$ with respect to $W$.\\

\noindent {\sf MakeArbDec}$(G, W, Y)$
\begin{enumerate}
\item {\it Termination Step:} If $|W| \le |Y|$, return the trivial arboreal decomposition of $G$ with respect to $W$.
\item {\it Divide Step:} Let $S':={\rm{\sf FindSep}}(G, Y, \frac{7}{8})$. If $S' \cap W=\emptyset$, let $S:=S' \cup \{v\}$ where $v$ is an arbitrary element of $W$. Otherwise, $S:=S'$.

Let $W_1:=V(G_1), ..., W_p:=V(G_p)$ be the refinement\footnote{Refer to the paragraph preceding \subroutineref{sub:glue}.} of $W$ with respect to $Y$ and $S$. Let $Y_j:=S \cup (Y \cap V(parent(G_j)))$ for each $1 \le j \le p$. Recursively compute $\mathcal{D}_j:={\rm{\sf MakeArbDec}}(G, W_j, Y_j)$ for each $1 \le j \le p$.
\item {\it Combine Step:} Let $\mathcal{D}:={\rm{\sf Glue}}(\mathcal{D}_1, ..., \mathcal{D}_p; W,Y,S)$. Return $\mathcal{D}$.
\end{enumerate}

Given a digraph $G$, our algorithm is just {\sf MakeArbDec}$(G,V(G),\emptyset)$.
\begin{clm}\label{claim:invariant}
At each invocation of {\sf MakeArbDec}, it holds that $|Y| \le 16 \beta \sqrt{\log{n}} \cdot dtw(G)$.
\end{clm}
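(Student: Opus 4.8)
The plan is to prove the bound by induction on the depth of the call within the recursion tree of ${\sf MakeArbDec}(G,V(G),\emptyset)$, where $n:=|V(G)|$. I will assume $dtw(G)\ge 1$ (if $dtw(G)=0$ then $G$ is a DAG, which has a trivial width-$0$ arboreal decomposition and can be handled separately), and, without loss of generality, that the constant $\beta$ of \cororef{cor:separator algorithms} satisfies $\beta\ge 1$, so that $\beta\sqrt{\log n}\cdot dtw(G)\ge 1$ whenever $n\ge 2$. The base case is the top-level call, where $Y=\emptyset$ and the bound holds trivially.

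For the inductive step, consider a call ${\sf MakeArbDec}(G,W,Y)$ satisfying the inductive hypothesis $|Y|\le 16\beta\sqrt{\log n}\cdot dtw(G)$, and let ${\sf MakeArbDec}(G,W_j,Y_j)$ be one of the recursive calls it spawns, so that $Y_j=S\cup\bigl(Y\cap V(parent(G_j))\bigr)$, where $parent(G_j)$ is one of the strongly connected components of $G\setminus S$ (see \subroutineref{sub:glue}). I would bound the two pieces of $Y_j$ separately. For the first piece, the set $S'={\sf FindSep}(G,Y,\tfrac{7}{8})$ has size at most $\beta\sqrt{\log n}\cdot dtw(G)$ by \cororef{cor:separator algorithms} (which applies since $\tfrac{7}{8}>\tfrac{3}{4}$), and $S$ equals either $S'$ or $S'\cup\{v\}$, whence $|S|\le \beta\sqrt{\log n}\cdot dtw(G)+1$. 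For the second piece, I would first observe that $S$ itself — not merely $S'$ — is a $\tfrac{7}{8}$-balanced directed vertex separator of $Y$ in $G$: since $W$ is $Y$-normal we have $v\notin Y$, so moving $v$ out of whichever part of $V(G)-S'$ contains it and into the separator leaves both parts with at most $\tfrac{7}{8}|Y|$ vertices of $Y$, and by \defref{definition:guarding} the part that $S'$ guards is still guarded by $S$. Writing $U_1',U_2'$ for the resulting parts, the fact that $S$ guards $U_2'$ means there is no arc of $G$ from $U_2'$ to $U_1'$, so no directed cycle of $G\setminus S$ can pass between the two parts; hence every strongly connected component of $G\setminus S$, and in particular $parent(G_j)$, lies entirely in $U_1'$ or entirely in $U_2'$. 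As each of $U_1',U_2'$ meets $Y$ in at most $\tfrac{7}{8}|Y|$ vertices, we obtain $|Y\cap V(parent(G_j))|\le \tfrac{7}{8}|Y|$.

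Combining the two pieces with the inductive hypothesis gives $|Y_j|\le |S|+|Y\cap V(parent(G_j))|\le \beta\sqrt{\log n}\cdot dtw(G)+1+\tfrac{7}{8}\cdot 16\beta\sqrt{\log n}\cdot dtw(G)=15\beta\sqrt{\log n}\cdot dtw(G)+1\le 16\beta\sqrt{\log n}\cdot dtw(G)$, the last inequality using $\beta\sqrt{\log n}\cdot dtw(G)\ge 1$; this closes the induction. I expect the only non-routine point to be the claim that $parent(G_j)$ lies on a single side of the separator — it is precisely this that converts the $\tfrac{7}{8}$-balance of the separators produced by {\sf FindSep} into a $\tfrac{7}{8}$-contraction of $|Y|$ at every level of the recursion, so that the sizes of the seed sets $Y$ accumulate geometrically rather than linearly, and it also explains why a balance parameter bounded away from $1$ (here $\tfrac{7}{8}$) is needed. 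The remaining care is just in tracking the occasional extra vertex $v$ thrown into $S$, which accounts for the additive $1$ that the slack in the constant $16$ absorbs.
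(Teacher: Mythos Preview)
Your proof is correct and follows essentially the same inductive argument as the paper: bound $|S|$ via \cororef{cor:separator algorithms}, bound $|Y\cap V(parent(G_j))|$ by $\tfrac{7}{8}|Y|$ using that $parent(G_j)$ lies on one side of the separator, and close the geometric recursion. You are in fact more careful than the paper on two points it leaves implicit---that $S$ (not just $S'$) is still a $\tfrac{7}{8}$-balanced separator of $Y$ because the extra vertex $v\in W$ lies outside $Y$, and that each strongly connected component of $G\setminus S$ is confined to one side of the partition---and your bound $|S|\le \beta\sqrt{\log n}\cdot dtw(G)+1$ is slightly sharper than the paper's $|S|\le 2\beta\sqrt{\log n}\cdot dtw(G)$, though the arithmetic lands in the same place.
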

\begin{proof}
We prove this by induction on the level of recursion. It is trivially true at the first invocation since $Y = \emptyset$. Now, consider a certain invocation whose input parameters are $G$ and $W,Y \subseteq V(G)$. Assume that $|Y| \le 16 \beta\sqrt{\log{n}} \cdot dtw(G)$ holds. Let $S'$ and $S$ be as defined in the divide step. It follows from \cororef{cor:separator algorithms} that $|S'| \le \beta \sqrt{\log{n}} \cdot dtw(G)$. Hence, it holds that $|S| \le 2\beta \sqrt{\log{n}} \cdot dtw(G)$. Let $W_1:=V(G_1), W_2:=V(G_2),...,W_p:=V(G_p)$ be the refinement of $W$ with respect to $Y$ and $S$. Now, it suffices to show that $|S \cup (Y \cap V(parent(G_j)))| \le 16\beta\sqrt{\log{n}}\cdot dtw(G)$ for each $1 \le j \le p$. Note that $V(parent(G_j))$ is the vertex set of a strongly connected component of $G \setminus S$, and $S$ is a $\frac{7}{8}$-balanced directed vertex separator of $Y$ in $G$. It follows that $|Y \cap V(parent(G_j))| \le \frac{7}{8}|Y| \le 14 \beta\sqrt{\log{n}}\cdot dtw(G)$. Hence, $|S \cup (Y \cap V(parent(G_j)))| \le |S| + |Y \cap V(parent(G_j))| \le 16\beta\sqrt{\log{n}}\cdot dtw(G)$.
\end{proof}
Now we are in a position to prove the correctness and approximation guarantee of our algorithm:
\begin{lem}
Given a digraph $G$, {\sf MakeArbDec}$(G,V(G),\emptyset)$ returns an arboreal decomposition of $G$, whose width is $O(\sqrt{\log{n}}\cdot dtw(G))$.
\end{lem}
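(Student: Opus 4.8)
The plan is to prove two things about the decomposition $\mathcal{D}$ returned by {\sf MakeArbDec}$(G,V(G),\emptyset)$: that it is a valid arboreal decomposition of $G$, and that each of its bags $W_i\cup X_{\sim i}$ has size $O(\sqrt{\log n}\cdot dtw(G))$, so that its width $\max\{|W_i\cup X_{\sim i}|:i\in V(T)\}-1$ is within the claimed factor.

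For correctness I would induct on $|W|$ to show that every invocation {\sf MakeArbDec}$(G,W,Y)$ (with $W$ being $Y$-normal, a property maintained along the recursion) returns an arboreal decomposition of $G$ with respect to $W$. The base case $|W|\le|Y|$ is \remref{rem:trivial}. For the inductive step, since the divide step forces $S\cap W\ne\emptyset$, the refinement pieces satisfy $W_1\cup\dots\cup W_p=W\setminus(S\cap W)\subsetneq W$, so each recursive call has a strictly smaller first set; the argument inside the proof of \clref{clm:glue} already shows each $W_j$ is $Y_j$-normal, so each recursive call is well-formed and, by the induction hypothesis, returns an arboreal decomposition with respect to $W_j$; then \clref{clm:glue} completes the step. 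Strict decrease of $|W|$ also bounds the recursion depth by $n$, so the procedure halts, and at the top level ($W=V(G)$) we obtain an arboreal decomposition of $G$.

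For the width bound, the structural fact I would establish first is that every node of the final arborescence $T$ is either (i) a node $r_0$ introduced as the new root by some recursive invocation with arguments $(G,W,Y)$, or (ii) the single node of a trivial decomposition returned by some terminating invocation with arguments $(G,W',Y')$; and, crucially, that since {\sf Glue} never modifies an existing node bag and only ever attaches fresh arcs to the root of the subtree being glued, each node ends up with at most one incoming arc over the whole execution — the one added when its own subtree is attached to its parent — whose bag is exactly the $Y$-argument of the invocation that created that node. Given this, I would bound the two kinds of node separately. For a type-(i) node $r_0$ from $(G,W,Y)$: its node bag is $S\cap W$, its outgoing-arc bags are the sets $Y_j=S\cup(Y\cap V(parent(G_j)))$, and its incoming-arc bag, if present, is $Y$; since the vertex sets $V(parent(G_j))$ are pairwise disjoint, $\bigcup_j Y_j\subseteq S\cup Y$, so $W_{r_0}\cup X_{\sim r_0}\subseteq S\cup Y$, and \cororef{cor:separator algorithms} (giving $|S|\le 2\beta\sqrt{\log n}\cdot dtw(G)$, accounting for the possible extra vertex $v$) together with \clref{claim:invariant} (giving $|Y|\le 16\beta\sqrt{\log n}\cdot dtw(G)$) bounds $|W_{r_0}\cup X_{\sim r_0}|$ by $18\beta\sqrt{\log n}\cdot dtw(G)$. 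For a type-(ii) node $w$ from $(G,W',Y')$ with $|W'|\le|Y'|$: its node bag is $W'$ and its only incident arc bag is $Y'$, and since $W'$ is $Y'$-normal these are disjoint, so $|W_w\cup X_{\sim w}|=|W'|+|Y'|\le 2|Y'|\le 32\beta\sqrt{\log n}\cdot dtw(G)$, again by \clref{claim:invariant}. (The empty digraph is a degenerate boundary case.) Taking the maximum over all nodes then gives width $O(\sqrt{\log n}\cdot dtw(G))$.

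I expect the main obstacle to be precisely the bookkeeping for the sets $X_{\sim i}$ under the nested {\sf Glue} operations: one must see that a node's incident-arc structure never grows beyond what is created when that node is born plus the single arc added when its subtree is later glued to its parent, and that both contributions are already controlled — the latter by \clref{claim:invariant} (it is a $Y$-argument), and the former (the outgoing bags $Y_j$ of an $r_0$-node) by the observation that each $Y_j$ consists of $S$ plus a slice of $Y$ living inside a distinct strong component of $G\setminus S$, so their union stays inside $S\cup Y$. Once this picture is set up, the rest is routine arithmetic with the bounds from \clref{claim:invariant} and \cororef{cor:separator algorithms}.
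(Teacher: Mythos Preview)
Your proposal is correct and follows essentially the same approach as the paper: correctness by inductively applying \clref{clm:glue} over the recursion, and the width bound by associating each node of $T$ to the invocation that created it and then splitting into the leaf case ($|W_i\cup X_{\sim i}|\le |W|+|Y|\le 2|Y|\le 32\beta\sqrt{\log n}\cdot dtw(G)$) and the internal-node case ($W_i\cup X_{\sim i}\subseteq S\cup Y$, bounded by $18\beta\sqrt{\log n}\cdot dtw(G)$), using \clref{claim:invariant} and \cororef{cor:separator algorithms}. Your bookkeeping is in fact more explicit than the paper's (which simply asserts $X_{\sim i}\subseteq S\cup Y$ ``from \subroutineref{sub:glue}''); one small slip is that the sets $V(parent(G_j))$ need not be pairwise disjoint (several $G_j$ can share a parent), but this is irrelevant since the inclusion $\bigcup_j Y_j\subseteq S\cup Y$ follows immediately from $Y_j\subseteq S\cup Y$ for each $j$.
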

\begin{proof}
Consider an invocation of {\sf MakeArbDec} which satisfies the condition of the termination step. Let the input parameters be $G$ and $W,Y \subseteq V(G)$. It follows from \remref{rem:trivial} that the invocation returns an arboreal decomposition of $G$ with respect to the set $W$. Let $\mathcal{D}=(T,\mathcal{W},\mathcal{X})$ be the decomposition returned by {\sf MakeArbDec}$(G,V(G),\emptyset)$. Repeatedly applying \clref{clm:glue}, we conclude that $\mathcal{D}$ is an arboreal decomposition of $G$. It remains to show that for each node $i$ of $T$, the size of $W_i \cup X_{\sim i}$ is $O(\sqrt{\log{n}}\cdot dtw(G))$. Note that $i$ corresponds to a unique invocation of {\sf MakeArbDec}. Let $G$ and $W,Y \subseteq V(G)$ be the input parameters for this invocation. It follows from \clref{claim:invariant} that $|Y| \le 16\beta\sqrt{\log{n}}\cdot dtw(G)$. We divide the rest of the proof into two cases depending on whether $i$ is a leaf node or not:\\
{\it Case $1:$} Suppose $i$ is a leaf node. The termination condition holds true for this invocation, i.e. $|W| \le |Y|$. In this case, $|W_i \cup X_{\sim i}| = |W_i| + |X_e|$ where $X_e$ is the unique incoming arc (if any) at node $i$. Observe that $W_i$ and $X_e$ are disjoint since $W_i$ is $X_e$-normal. Since $W_i=W$ and $X_e=Y$, we conclude that $|W_i \cup X_{\sim i}| \le 32\beta\sqrt{\log{n}}\cdot dtw(G)$.\\
{\it Case $2:$} Suppose $i$ is not a leaf node. Let $S'$ and $S$ be as defined in the divide step of this invocation. \cororef{cor:separator algorithms} implies that $|S'| \le \beta \sqrt{\log{n}} \cdot dtw(G)$. Hence, we have $|S| \le 2\beta \sqrt{\log{n}} \cdot dtw(G)$. From \subroutineref{sub:glue}, it follows that $W_i = S \cap W$ and $X_{\sim i} \subseteq S \cup Y$. We conclude that $|W_i \cup X_{\sim i}| \le |S| + |Y| \le 18\beta\sqrt{\log{n}}\cdot dtw(G)$.
\end{proof}

\begin{theorem}
There exists a polynomial time approximation algorithm that, given a digraph $G$, computes an arboreal decomposition of $G$, whose width is $O(\sqrt{\log{n}}\cdot dtw(G))$.
\end{theorem}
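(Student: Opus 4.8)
The plan is to note that essentially everything has already been done: the fact that \textsf{MakeArbDec}$(G,V(G),\emptyset)$ returns a valid arboreal decomposition of $G$ and that its width is $O(\sqrt{\log n}\cdot dtw(G))$ is precisely the content of the lemma proved immediately above (which in turn rests on \clref{clm:glue} and \clref{claim:invariant}). So the only thing left to verify for this theorem is that the algorithm runs in polynomial time. First I would bound the work in a single invocation of \textsf{MakeArbDec}: the termination test is trivial; the divide step calls \textsf{FindSep}, which is polynomial by \cororef{cor:separator algorithms}, and then computes the refinement $W_1,\dots,W_p$ of $W$ with respect to $Y$ and $S$ together with the $parent$ map, which amounts to computing strongly connected components of $G\setminus S$ and then of each $C_i\setminus Y$ — polynomial by standard algorithms; and the combine step \textsf{Glue} (see \subroutineref{sub:glue}) merely adds one new root, $p\le n$ new arcs, and relabels a few bags, so it is polynomial. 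Hence each invocation performs $\mathrm{poly}(n)$ work beyond its recursive calls.

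Next I would bound the total number of invocations, i.e. the size of the recursion tree. The key observation is that in the divide step $S$ is constructed so that $S\cap W\ne\emptyset$ always holds: either \textsf{FindSep} already returns such a set, or a vertex $v\in W$ is added explicitly. Since $W$ and $Y$ are disjoint, each $W_j$ in the refinement is contained in $W\setminus(S\cup Y)=W\setminus S\subsetneq W$, so $W_j$ is a nonempty \emph{proper} subset of $W$; in particular $|W_j|<|W|$. Consequently, along any root-to-leaf path of the recursion tree the size of the first-argument set strictly decreases, so the recursion depth is at most $n$. Moreover, at any fixed level of the recursion the $W$-arguments are pairwise disjoint and nonempty: this is clear at the top level, and it is preserved inductively because the children of an invocation with argument $W$ receive pairwise disjoint subsets of $W$, and distinct parents at a level already have disjoint $W$'s. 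Hence there are at most $n$ invocations per level, and the recursion tree has $O(n^{2})$ nodes in total.

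Combining the two bounds, \textsf{MakeArbDec}$(G,V(G),\emptyset)$ performs $O(n^{2})\cdot\mathrm{poly}(n)=\mathrm{poly}(n)$ total work, so it is a polynomial time algorithm; together with the correctness and width guarantees from the preceding lemma this proves the theorem. I do not expect a real obstacle here — this is a wrap-up argument — and the only point that genuinely needs care is the strict-shrinkage claim, which is exactly the reason the divide step forces $S\cap W\ne\emptyset$: without it, \textsf{FindSep} could return a separator disjoint from $W$ and the recursion would stall.
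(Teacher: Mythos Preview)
Your proposal is correct and matches the paper, which simply states the theorem as an immediate consequence of the preceding lemma without a separate proof. Your explicit polynomial-time analysis is sound (and can even be tightened: the invocations of \textsf{MakeArbDec} are in bijection with the nodes of the final arborescence, whose nonempty bags partition $V(G)$, so there are at most $n$ invocations in total rather than $O(n^{2})$).
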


\section{Discussion}\label{sec:discussion}

We have presented approximation algorithms for several digraph width parameters. To the best of our knowledge, these are the first (non-trivial) approximation algorithms for each of these parameters. Austrin et al. \cite{tw-inapproximability} have shown that assuming the Small Set Expansion conjecture, treewidth and pathwidth are both hard to approximate within any constant factor. Since all the considered width measures are generalizations of either treewidth or pathwidth, it follows that a similar inapproximability result holds for each of these. The natural question that arises is whether one can design algorithms with better approximation guarantees for these width measures, or otherwise establish stronger inapproximability results. In particular, our approximation algorithms for DAG-width and Kelly-width are implied by our construction of approximate directed path decompositions. This suggests that it might be possible to improve upon our results by computing DAG-decompositions and Kelly-decompositions directly.\\
\indent A limitation of our algorithms is that the approximation guarantees depend on the size of the input digraph. For the considered digraph width measures, most problems (such as Hamiltonian path) which are solvable in polynomial time on digraphs of bounded width are known to admit only {\sf XP} algorithms (as opposed to {\sf FPT}). Some of these problems are also known to be {\sf W[2]}-hard \cite{whardness1}, \cite{whardness2}. In view of these results, it would be desirable to have algorithms whose approximation guarantees depend only on the considered width parameter. As discussed in \secref{sec:intro}, such algorithms exist for (undirected) treewidth {\cite{eyalamir}, \cite{fhl-tw-approx}. However, designing such algorithms for the considered parameters seems to require more sophisticated techniques and we leave it as an open problem.
\subparagraph*{Acknowledgements} We gratefully acknowledge helpful discussions with Robin Thomas.  He motivated us to look at Theorem 3.3 from \cite{dtw-definition}. We thank James Lee and Yury Makarychev for answering our questions about directed separators. Finally, the second author would like to extend special thanks to Joseph Cheriyan and Konstantinos Georgiou whose help was indispensable in improving the presentation.

\bibliographystyle{alpha}
\bibliography{bib-approx-directed-width}

\newcommand{\etalchar}[1]{$^{#1}$}
\begin{thebibliography}{ACMM05}

\bibitem[ACMM05]{agarwal-charikar-makarychev}
Amit Agarwal, Moses Charikar, Konstantin Makarychev, and Yury Makarychev.
\newblock O($\sqrt{\log n}$) approximation algorithms for min uncut, min 2cnf
  deletion, and directed cut problems.
\newblock In {\em Proceedings of the thirty-seventh annual ACM symposium on
  Theory of computing}, STOC '05, pages 573--581, New York, NY, USA, 2005.

\bibitem[ACP87]{tw-pw-npcomplete}
Stefan Arnborg, Derek~G. Corneil, and Andrzej Proskurowski.
\newblock Complexity of finding embeddings in a k-tree.
\newblock {\em SIAM J. Algebraic Discrete Methods}, 8(2):277--284, April 1987.

\bibitem[Ami10]{eyalamir}
Eyal Amir.
\newblock Approximation algorithms for treewidth.
\newblock {\em Algorithmica}, 56(4):448--479, January 2010.

\bibitem[APW12]{tw-inapproximability}
Per Austrin, Toniann Pitassi, and Yu~Wu.
\newblock Inapproximability of treewidth, one-shot pebbling, and related layout
  problems.
\newblock In {\em APPROX-RANDOM}, pages 13--24, 2012.

\bibitem[Bar06]{dpw-definition}
Janos Barat.
\newblock Directed path-width and monotonicity in digraph searching.
\newblock {\em Graphs and Combinatorics}, 22(2):161--172, 2006.

\bibitem[BDH{\etalchar{+}}12]{dagw-journal}
Dietmar Berwanger, Anuj Dawar, Paul Hunter, Stephan Kreutzer, and Jan
  Obdr\'{a}Lek.
\newblock The dag-width of directed graphs.
\newblock {\em J. Comb. Theory Ser. B}, 102(4):900--923, 2012.

\bibitem[BDHK06]{dagw-definition1}
Dietmar Berwanger, Anuj Dawar, Paul Hunter, and Stephan Kreutzer.
\newblock Dag-width and parity games.
\newblock In {\em Proceedings of the 23rd Annual symposium on Theoretical
  Aspects of Computer Science}, STACS'06, pages 524--536, Berlin, Heidelberg,
  2006. Springer-Verlag.

\bibitem[BGHK95]{bodlaender-tw-approx}
Hans~L. Bodlaender, John~R. Gilbert, Hj{\'a}lmtyr Hafsteinsson, and Ton Kloks.
\newblock Approximating treewidth, pathwidth, frontsize, and shortest
  elimination tree.
\newblock {\em J. Algorithms}, 18(2):238--255, 1995.

\bibitem[BKMT04]{bkmt-tw-approx}
Vincent Bouchitt{\'e}, Dieter Kratsch, Haiko M{\"u}ller, and Ioan Todinca.
\newblock On treewidth approximations.
\newblock {\em Discrete Applied Mathematics}, 136(2-3):183--196, 2004.

\bibitem[Bod98]{bodlaender-survey}
Hans~L. Bodlaender.
\newblock A partial k-arboretum of graphs with bounded treewidth.
\newblock {\em Theor. Comput. Sci.}, 209(1-2):1--45, December 1998.

\bibitem[Die05]{diestel}
Reinhard Diestel.
\newblock {\em Graph Theory}.
\newblock Springer, 3 edition, 2005.

\bibitem[DKT97]{tw-invisible-inert}
Nick~D. Dendris, Lefteris~M. Kirousis, and Dimitrios~M. Thilikos.
\newblock Fugitive-search games on graphs and related parameters.
\newblock {\em Theor. Comput. Sci.}, 172(1-2):233--254, 1997.

\bibitem[FHL08]{fhl-tw-approx}
Uriel Feige, MohammadTaghi Hajiaghayi, and James~R. Lee.
\newblock Improved approximation algorithms for minimum weight vertex
  separators.
\newblock {\em SIAM J. Comput.}, 38(2):629--657, 2008.

\bibitem[GHK{\etalchar{+}}09]{whardness2}
Robert Ganian, Petr Hlinen{\'y}, Joachim Kneis, Alexander Langer, Jan
  Obdrz{\'a}lek, and Peter Rossmanith.
\newblock On digraph width measures in parameterized algorithmics.
\newblock In {\em IWPEC}, pages 185--197, 2009.

\bibitem[HK08]{kellywidth-definition}
Paul Hunter and Stephan Kreutzer.
\newblock Digraph measures: Kelly decompositions, games, and orderings.
\newblock {\em Theor. Comput. Sci.}, 399(3):206--219, 2008.

\bibitem[JRST01]{dtw-definition}
Thor Johnson, Neil Robertson, P.~D. Seymour, and Robin Thomas.
\newblock Directed tree-width.
\newblock {\em J. Comb. Theory Ser. B}, 82(1):138--154, May 2001.

\bibitem[Klo94]{kloks}
Ton Kloks.
\newblock {\em Treewidth, Computations and Approximations}, volume 842 of {\em
  Lecture Notes in Computer Science}.
\newblock Springer, 1994.

\bibitem[KP86]{searching-pebbling}
M~Kirousis and C~H Papadimitriou.
\newblock Searching and pebbling.
\newblock {\em Theor. Comput. Sci.}, 47(2):205--218, November 1986.

\bibitem[Lag96]{tw-lagergren-algo}
Jens Lagergren.
\newblock Efficient parallel algorithms for graphs of bounded tree-width.
\newblock {\em J. Algorithms}, 20(1):20--44, January 1996.

\bibitem[LKM11]{whardness1}
Michael Lampis, Georgia Kaouri, and Valia Mitsou.
\newblock On the algorithmic effectiveness of digraph decompositions and
  complexity measures.
\newblock {\em Discrete Optimization}, 8(1):129--138, 2011.

\bibitem[LR99]{leighton-rao}
Frank~Thomson Leighton and Satish Rao.
\newblock Multicommodity max-flow min-cut theorems and their use in designing
  approximation algorithms.
\newblock {\em J. ACM}, 46(6):787--832, 1999.

\bibitem[Obd06]{dagw-definition2}
Jan Obdr\v{z}\'{a}lek.
\newblock Dag-width: connectivity measure for directed graphs.
\newblock In {\em Proceedings of the seventeenth annual ACM-SIAM symposium on
  Discrete algorithm}, SODA '06, pages 814--821, New York, NY, USA, 2006.

\bibitem[Ree92]{tw-reed-algo}
Bruce~A. Reed.
\newblock Finding approximate separators and computing tree width quickly.
\newblock In {\em Proceedings of the twenty-fourth annual ACM symposium on
  Theory of computing}, STOC '92, pages 221--228, New York, NY, USA, 1992. ACM.

\bibitem[Ree99]{dtw-reed}
Bruce~A. Reed.
\newblock Introducing directed tree width.
\newblock {\em Electronic Notes in Discrete Mathematics}, 3:222--229, 1999.

\bibitem[ST93]{tw-visible-dynamic}
P.~D. Seymour and Robin Thomas.
\newblock Graph searching and a min-max theorem for tree-width.
\newblock {\em J. Comb. Theory Ser. B}, 58(1):22--33, May 1993.

\end{thebibliography}

\appendix

\section{Cops-and-robber games}\label{ap:cops-and-robber}
In this Appendix, we give a proof of \propref{prop:kpw-kw-gap} which says that the gap between Kelly pathwidth and Kelly-width can be arbitrarily large. In order to do this, we first describe informally two variants of the cops-and-robber game on a digraph. The first of these is the {\it visible and dynamic} variant, and the second is the {\it invisible and inert} variant. Unless otherwise stated, the description that follows applies to both of these variants.\\
\indent Let $G$ be the digraph on which the game is being played. There are two players, a {\it cop player} and a {\it robber player}. The cop player has $k$ tokens called {\it cops}, and the robber player has a single token called the {\it robber}. We refer to the robber player and the robber interchangeably. The players take turns alternately during which they place their tokens on the vertices of $G$. The cop player's objective is to capture the robber by having a cop occupy the same vertex as the robber. Initially, there are no cops on $G$, and the robber can be occupying any vertex. At each turn of the cop player, he moves an arbitrary subset of cops to any vertices. It is convenient to think of the cops as using helicopters. Let the vertices occupied by the cops be $X$. In case the robber is not captured, the cop player announces where the cops will go in the next turn, say $X'$. At this point, all the cops which are occupying vertices other than $X \cap X'$ remove themselves completely from $G$. One may think of these cops as being in their helicopters. It is during this transition that the robber takes his turn. He evades capture (if possible) by running from his current position along any directed path $P$ which is free of cops, i.e. $P$ does not use a vertex of $X \cap X'$.\\
\indent In the {\it dynamic and visible} variant, the robber is visible to the cops and he may choose to move whenever it is his turn to do so. In the {\it inert and invisible} variant, the robber is invisible to the cops and he is allowed to move only if a cop is about to land on his current position. Note that being invisible is of advantage to the robber but being inert is of disadvantage.\\
\indent Each of these games has a corresponding {\it monotone} version. The game is called {\it cop-monotone} if the cop player is not allowed to occupy a previously vacated vertex. The game is called {\it robber-monotone} if the set of vertices reachable by the robber at each move is not allowed to expand. The game is called {\it monotone} if it is both cop-monotone and robber-monotone. It turns out that the DAG-width of $G$ is $k$ if and only if $k$ cops can capture the robber in the monotone version of the visible and dynamic cops-and-robber game on $G$. Similarly, the Kelly-width of $G$ is $k$ if and only if $k$ cops can capture the robber in the monotone version of the invisible and inert cops-and-robber game on $G$. We refer the reader to \cite{dagw-journal} and \cite{kellywidth-definition} for a detailed treatment of these versions of the game and proofs of these equivalences respectively.\\
\indent The above mentioned games can also be played on an undirected graph $H$. Let $\overline{H}$ denote the directed graph obtained by replacing each edge $\{u,v\}$ of $H$ by two arcs $(u,v)$ and $(v,u)$. Now, one can play the same game on $\overline{H}$. It turns out that the treewidth of $H$ is $k-1$ if and only if $k$ cops can capture the robber in the monotone version of the visible and dynamic cops-and-robber game on $H$ if and only if $k$ cops can capture the robber in the monotone version of the invisible and inert cops-and-robber game on $H$. The equivalence of treewidth and the visible and dynamic variant was shown by Seymour and Thomas \cite{tw-visible-dynamic}. The equivalence of treewidth and the invisible and inert variant was shown by Dendris et al. \cite{tw-invisible-inert}. This leads us to the next proposition.
\begin{prop}\label{prop:equivalences}
For an undirected graph $H$, the following are equivalent:
\begin{enumerate}
\item $H$ has treewidth $k-1$.
\item $\overline{H}$ has DAG-width $k$.
\item $\overline{H}$ has Kelly-width $k$.
\end{enumerate}
\end{prop}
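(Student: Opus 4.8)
The plan is to obtain all three equivalences by composing known cops-and-robber characterizations, using the digraph $\overline{H}$ (equivalently, the graph $H$ itself) as the common playing field. The pivotal observation is that a directed walk in $\overline{H}$ avoiding a prescribed vertex set is exactly an undirected walk in $H$ avoiding that set, since every arc of $\overline{H}$ occurs together with its reverse. Consequently, for either of the two game variants described above, a winning strategy for $k$ cops on the undirected graph $H$ is the same thing as a winning strategy for $k$ cops on $\overline{H}$, and this correspondence respects both cop-monotonicity and robber-monotonicity. Hence I may pass freely between "the monotone game on $H$" and "the monotone game on $\overline{H}$".

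For (1) $\Leftrightarrow$ (2): by the theorem of Seymour and Thomas, $H$ has treewidth $k-1$ if and only if $k$ cops have a winning strategy in the monotone visible-and-dynamic cops-and-robber game on $H$, hence (by the previous paragraph) on $\overline{H}$; and by the characterization of Berwanger et al.\ recalled in the body of the paper, $\overline{H}$ has DAG-width $k$ if and only if $k$ cops win the monotone visible-and-dynamic game on $\overline{H}$. Composing these two equivalences yields (1) $\Leftrightarrow$ (2). For (1) $\Leftrightarrow$ (3): by the theorem of Dendris, Kirousis and Thilikos, $H$ has treewidth $k-1$ if and only if $k$ cops win the monotone invisible-and-inert game on $H$, hence on $\overline{H}$; and by the characterization of Hunter and Kreutzer recalled in the body, $\overline{H}$ has Kelly-width $k$ if and only if $k$ cops win the monotone invisible-and-inert game on $\overline{H}$. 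Composing these yields (1) $\Leftrightarrow$ (3), and transitivity closes the loop.

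I do not expect a genuine obstacle: the argument is essentially a bookkeeping exercise in chaining quoted equivalences. The one step that deserves to be spelled out is the claim in the first paragraph that the directed game on $\overline{H}$ faithfully simulates the undirected game on $H$ in both variants, including the monotonicity conditions; and one must keep the index shifts aligned ($k$ cops $\leftrightarrow$ treewidth $k-1$ $\leftrightarrow$ DAG-width/Kelly-width $k$), as the cited sources normalize differently. As a sanity check, (1)$\Rightarrow$(2) and (1)$\Rightarrow$(3) also follow at once from the facts recalled in the introduction that $\overline{G}$ has DAG-width and Kelly-width equal to $\mathrm{tw}(G)+1$; the game-based route is adopted here mainly because it is the machinery the appendix develops in order to prove \propref{prop:kpw-kw-gap}.
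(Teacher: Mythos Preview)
Your proposal is correct and matches the paper's approach exactly: the paper does not give a separate proof of this proposition but derives it from the preceding paragraph, which chains the Seymour--Thomas and Dendris--Kirousis--Thilikos characterizations of treewidth with the Berwanger et al.\ and Hunter--Kreutzer characterizations of DAG-width and Kelly-width via the observation that playing either game on $H$ is the same as playing it on $\overline{H}$. Your write-up is in fact more explicit than the paper's, and your index-shift and sanity-check remarks are apt.
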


Now, we can proceed to prove \propref{prop:kpw-kw-gap}.

\noindent {\bf Proof of \propref{prop:kpw-kw-gap}:}\\
\indent Let $T_i$ be the (undirected) complete ternary tree of height $i \ge 2$. It can be easily checked that trees have treewidth $1$, and thus $T_i$ has treewidth $1$. \propref{prop:equivalences} implies that $\overline{T_i}$ has Kelly-width $2$. It is known from \cite{searching-pebbling} that $T_i$ has pathwidth exactly $i$, and it is straightforward to show that $\overline{T_i}$ must therefore have directed pathwidth exactly $i$ as well. Using \thref{thm:kpw-equals-dpw}, $\overline{T_i}$ has Kelly pathwidth exactly $i$. Hence, the family $\{T_i:i \ge 2\}$ proves the claim.\hfill$\qed$


\end{document}